\newcommand{\commentout}[1]{}
\def\tO{\tilde{O}}
\def\epsilon{\varepsilon}
\newcommand{\alert}[1]{\textbf{\color{red}
		[[[#1]]]}\marginpar{\textbf{\color{red}**}}\typeout{ALERT:
		\the\inputlineno: #1}}
\def\tO{\tilde{O}}
\newcommand{\PP}{\mathcal{P}}
\newcommand{\R}{\mathbb{R}}
\newcommand{\E}{{\mathbb{E}}}
\newcommand{\mommit}[1]{}
\newcommand{\namedref}[2]{\hyperref[#2]{#1~\ref*{#2}}}
\newcommand{\sectionref}[1]{\namedref{Section}{#1}}
\newcommand{\figureref}[1]{\namedref{Figure}{#1}}
\newcommand{\algref}[1]{\namedref{Algorithm}{#1}}
\newcommand{\lemmaref}[1]{\namedref{Lemma}{#1}}
\newtheorem{theorem}{Theorem}
\newtheorem{lemma}{Lemma}
\title{Improved Weighted Additive Spanners}
\author[1]{Michael Elkin}
\author[1]{Yuval Gitlitz}
\author[1]{Ofer Neiman}
\affil[1]{Department of Computer Science, Ben-Gurion University of the Negev,
	Beer-Sheva, Israel. Email: \texttt{\{elkinm,neimano\}@cs.bgu.ac.il}, \texttt{gitlitz@post.bgu.ac.il }}
\begin{document}

\maketitle{}

\begin{abstract}
Graph spanners and emulators are sparse structures that approximately preserve distances of the original graph. While there has been an extensive amount of work on additive spanners, so far little attention was given to weighted graphs. Only very recently \cite{ABSKS20} extended the classical +2 (respectively, +4) spanners for unweighted graphs of size $O(n^{3/2})$ (resp., $O(n^{7/5})$) to the weighted setting, where the additive error is $+2W$ (resp., $+4W$). This means that for every pair $u,v$, the additive stretch is at most $+2W_{u,v}$, where $W_{u,v}$ is the maximal edge weight on the shortest $u-v$ path (weights are normalized so that the minimum edge weight is 1).  In addition, \cite{ABSKS20} showed a randomized algorithm yielding a $+8W_{max}$ spanner of size $O(n^{4/3})$, here $W_{max}$ is the maximum edge weight in the entire graph.

In this work we improve the latter result by devising a simple deterministic algorithm for a $+(6+\varepsilon)W$ spanner for weighted graphs with size $O(n^{4/3})$ (for any constant $\varepsilon>0$), thus nearly matching the classical +6 spanner of size $O(n^{4/3})$ for unweighted graphs. Furthermore, we show a $+(2+\varepsilon)W$ subsetwise spanner of size $O(n\cdot\sqrt{|S|})$, improving the $+4W_{max}$ result of \cite{ABSKS20} (that had the same size). We also show a simple randomized algorithm for a $+4W$ emulator of size $\tO(n^{4/3})$.

In addition, we show that our technique is applicable for very sparse additive spanners, that have linear size. It is known that such spanners must suffer polynomially large stretch. 
For weighted graphs, we use a variant of our simple deterministic algorithm that yields a linear size  $+\tO(\sqrt{n}\cdot W)$ spanner, and we also obtain a tradeoff between size and stretch.

Finally, generalizing the technique of \cite{DHZ00} for unweighted graphs, we devise an efficient randomized algorithm producing a $+2W$ spanner for weighted graphs of size $\tO(n^{3/2})$ in $\tO(n^2)$ time. 
\end{abstract}

\section{Introduction}
Let $G= (V, E, w)$ be a weighted undirected graph on $n$ vertices. Denote by $d_G(u, v)$ the distance between $u, v \in V$ in the graph  $G$. A graph $H = (V, E', w)$ is an {\em $(\alpha, \beta)$-spanner} of $G$ if it is a subgraph of $G$ and for every $u, v \in V$,
\[
	d_H(u, v) \le \alpha \cdot d_G(u, v) + \beta.
\]
For an {\em emulator} $H$, we drop the subgraph requirement (that is, we allow $H$ to have edges that are not present in $G$, while still maintaining $d_H(u,v)\ge d_G(u,v)$ for all $u,v\in V$).

Spanners were introduced in the 80's by \cite{PS89}, and have been extensively studied ever since. One of the key objectives in this field is to understand the tradeoff between the stretch of a spanner and its size (number of edges). For purely multiplicative spanners (with $\beta=0$), an answer was quickly given: for any integer $k\ge 1$, \cite{DBLP:journals/dcg/AlthoferDDJS93} showed that a greedy algorithm provides a $(2k-1, 0)$-spanner with size $O(n^{1 + 1/k})$. This bound is tight assuming Erd\H{o}s' girth conjecture.

In this paper we focus on purely additive spanners, where $\alpha=1$, which we denote by $+\beta$ spanners.
Almost all of the previous work on purely additive spanners was done for unweighted graphs. The first purely additive spanner was a $+2$ spanner of size $O(n^{1.5})$ \cite{DBLP:journals/siamcomp/AingworthCIM99, EP04}, which was followed by a $+6$ spanner of size $O(n ^ {4/3})$ \cite{DBLP:conf/soda/BaswanaKMP05, DBLP:conf/swat/Knudsen14}, and a $+4$ spanner of size $O(n^{7/5})$ \cite{DBLP:conf/soda/Chechik13,DBLP:journals/corr/abs-2001-07741}. A result of \cite{DBLP:journals/jacm/AbboudB17} showed that any purely additive spanner with $O(n^{4/3-\delta})$ edges, for constant $\delta>0$, must have a polynomial stretch $\beta$. On the other hand, several works \cite{P09,DBLP:conf/soda/Chechik13,BW15,BW16} obtained sparser spanners with polynomial stretch. The state-of-the-art result of \cite{BW16} has near-linear size and stretch $\tO(n^{3/7})$.

In \cite{EP04} the notion of {\em near-additive} spanners for unweighted graphs was introduced, where $\alpha=1+\epsilon$ for some small $\epsilon>0$. They showed $(1+\epsilon,\beta)$-spanners of size $O(\beta\cdot n^{1+1/k})$ with $\beta = O(\frac{\log k}{\epsilon})^{\log k}$. Many following works \cite{E01,EZ04,TZ06,P09,ABP17,EN16} improved several aspects of these spanners, but up to the $\beta$ factor in the size, this is still the state-of-the-art. Providing some evidence to its tightness, \cite{ABP17} showed that such spanners must have $\beta = \Omega(\frac{1}{\epsilon\cdot \log k})^{\log k}$.

Since many applications of spanners stem from weighted graphs (see \cite{ABSKS20} and the references therein), it is only natural to study additive spanners in that setting. Assume the weights are normalized so that the minimum edge weight is 1. We distinguish between two types of additive spanners; in the first one the additive stretch is $+c \cdot W_{\max}$, where $W_{\max}$ is the weight of heaviest edge in the graph, and $c$ is usually some constant.
A more desirable type of additive stretch is denoted by $+c\cdot W$, which means that for every $u, v \in V$,
\[
	d_H(u, v) \le d_G(u, v) + c \cdot W_{u, v},
\]
where $W_{u, v}$ is the heaviest edge in the shortest path between $u, v$ in $G$.
This estimation is not only stronger, but also handles nicely the multiplicative perspective of the spanner: a $+c \cdot W$ spanner is also a $(c + 1, 0)$ spanner (while a $+W_{max}$ approximation can have unbounded multiplicative stretch).

The first adaptation of (near)-additive spanners to the weighted setting was given in \cite{DBLP:journals/corr/abs-1907-11422}, where we showed near-additive spanners and emulators with essentially the same stretch and size as the state-of-the-art results for unweighted graphs, while $\beta$ is multiplied by $W$ (the maximal edge weight on the corresponding path).
In addition, a construction of an additive $+2W$ spanner of size $\tilde{O}(n^{3/2})$ can be inferred from \cite{DBLP:journals/corr/abs-1907-11422}.\footnote{The notation $\tilde{O}(\cdot)$ hides polylogarithmic factors.}
Ahmed et al.  \cite{ABSKS20} recently gave a comprehensive study of weighted additive spanners. Among other results, they showed a $+2W_{\max}$ spanner of size $O(n^{1.5})$, a $+4W$ spanner of size $O(n^{7/5})$,
\footnote{In their paper the spanner is claimed to be $+4W_{\max}$ but a tighter analysis shows it is actually a $+4W$.} and a $+8 W_{\max}$ spanner of size $O(n^{4/3})$. Given a set $S\subseteq V$, they showed a $+4W_{max}$ subsetwise spanner of size $O(n\cdot\sqrt{|S|})$ (that has approximation guarantee only for pairs in $S\times S$). While the former two results match the state-of-the-art unweighted bounds, the latter two leave room for improvement. Indeed, \cite{ABSKS20} pose as an open problem whether a $+6W_{\max}$ spanner of size $O(n^{4/3})$ can be achieved.



\paragraph{Our results.} In this work we improve the bounds of \cite{ABSKS20} both quantitatively and qualitatively. For any constant $\varepsilon > 0$, we show a simple deterministic construction of a $+(6 + \varepsilon)W$ spanner of size $O(n^{4/3})$.\footnote{For arbitrary $\varepsilon > 0$, the size of our spanner is $O(n^{4/3}/\varepsilon)$.} Thus, the additive stretch of our spanner is arbitrarily close to $6W$, while having the superior dependence on the largest edge weight on the shortest $u-v$ path, rather than the global maximum weight. Furthermore, our algorithm is a simple greedy algorithm, in contrast to the more involved 2-stages randomized algorithm of \cite{ABSKS20}.

We show the versatility of our techniques by applying them to the subsetwise setting. Given a set $S\subseteq V$, for any constant $\varepsilon>0$, we obtain a $(2+\varepsilon)\cdot W$ subsetwise spanner of size $O(n\cdot \sqrt{|S|})$, again improving \cite{ABSKS20} both in the stretch and in the dependence on maximal edge weight.

A slight variant of our simple greedy algorithm works in the setting of sparse spanners with polynomial additive stretch, also for weighted graphs. This is in contrast to essentially all previous algorithms for very sparse pure additive spanners, that were rather involved. In particular, we obtain a linear size $+\tO(\sqrt{n})\cdot W$ spanner, and more generally, for any $0\le\varepsilon\le 1$, a $+O(n^{\frac{1-\varepsilon}{2}} \log n)W$ spanner of size $O(n^{1 + \varepsilon})$.
While this result does not match the state-of-the-art for unweighted graphs, we believe it is interesting to have such spanners in the weighted setting, and we find the simplicity of the algorithm appealing.

In addition, we show a simple randomized algorithm that produces a $+4W$ emulator of size $\tilde{O}(n^{4/3})$. This corresponds to the $+4$ emulator of size $O(n^{4/3})$ for unweighted graphs \cite{DBLP:journals/siamcomp/AingworthCIM99,EP04}.

Finally, bearing the mind the applications of such spanners to efficiently computing shortest paths, we devise an efficient $\tO(n^2)$ time algorithm for a $+(2+\varepsilon)W$ spanner of size $\tO(n^{3/2})$. This result builds on the \cite{DHZ00} $+2$ spanner for unweighted graphs.


\paragraph{Overview of our construction and analysis.}

Our algorithms for the $(6+\varepsilon)\cdot W$ spanner and the $(2+\varepsilon)\cdot W$ subsetwise spanner follow a common approach.
We adapt the algorithm of \cite{DBLP:conf/swat/Knudsen14}, who showed a simple $+6$ spanner for unweighted graphs, to the weighted setting. Both \cite{DBLP:conf/swat/Knudsen14} and the path-buying construction of \cite{DBLP:conf/soda/BaswanaKMP05} iteratively add paths to the spanner $H$, and argue that for each new edge in a path that is added to $H$, there is some progress for many pairs of vertices. Specifically,
assume that for some $u, v \in V$ we have for a constant $c$ that
\begin{equation}\label{eq:1}
	d_H(u, v) \le d_G(u, v) + c~,
\end{equation}
where $H$ is the current spanner we maintain.
For unweighted graphs, if we make progress and improve the distance in $H$ between $u,v$, it will be by at least 1. Thus, once we obtain \eqref{eq:1}, the distance between $u, v$ can be improved at most $c$ more times. This nice attribute does not apply to weighted graphs, since there the distance between $u, v$ can be improved only by a tiny amount.

In our algorithm, we first add the $t$-lightest edges incident on every vertex (the value of $t$ depends on the required sparsity), and then greedily add shortest paths between vertices whose stretch is too large, ordered by their $W$. To overcome the issue of tiny improvements, our notion of progress depends on the weights. That is, when adding paths to the spanner, we will show that many pairs improve their distance by at least $\Omega(\varepsilon \cdot W)$. Note that $W$ is in fact a function (the maximum edge weight in the current path), so some care is required to ensure sufficient progress is made for many other pairs (that can have either a smaller or a larger $W$).
Now, if the current distance in $H$ between $u, v \in V$ is
\[
	d_H(u, v) \le d_G(u, v) + c \cdot W,
\]

then the distance between $u, v$ can be improved at most $O(\frac{c}{\varepsilon})$ more times. This number translates directly to the size of the spanner, and also affects the stretch. 

While our linear size $+\tO(\sqrt{n})\cdot W$ spanner is also built using a similar greedy algorithm (augmented by a multiplicative spanner), its analysis is more involved. We use a labeling scheme of the graph vertices. The idea is that each of the greedily added paths must have labeled a lot of new vertices, else we could have used the existing $t$-lightest edges, combined with the multiplicative spanner and the previously added paths, to obtain a sufficiently low stretch alternative path. We then conclude that the number of added paths is bounded, which is then used to bound the number of edges added to the spanner in all these paths, by an argument based on low intersections between shortest paths.


\subsection{Organization}

After reviewing a few preliminary results in \sectionref{sec:prel}, we show our $+(6+\varepsilon)\cdot W$ spanner in \sectionref{sec:6span}, and the linear size spanner with polynomial stretch for weighted graphs in \sectionref{sec:poly}. The $+2W$ spanner with $\tilde{O}(n^2)$ construction time is shown in \sectionref{sec:fast}. Our $+(2+\varepsilon)\cdot W$ subsetwise spanner is in \sectionref{sec:subset}, and the $+4W$ emulator in \sectionref{sec:4emul}.

\section{Preliminaries}\label{sec:prel}

Let $G=(V,E,w)$ be a weighted undirected graph, with nonnegative weights $w:E\to\R_+$ 
, and fix a parameter $\varepsilon > 0$.
Denote by $P_{u, v}$ the shortest path between vertices $u, v\in V$, breaking ties consistently (say by id's), so that every sub-path of a shortest path is also a shortest path and two shortest paths have at most one intersecting subpath.
Let $W_{u, v}$ denote the weight of the heaviest edge in $P_{u, v}$. For a positive integer $t$, a {\em $t$-light initialization} of $G$ is a subgraph $H=(V,E',w)$ that contains, for each $u\in V$, the lightest $t$ edges incident on $u$ (or all of them, if $\deg(u)\le t$), breaking ties arbitrarily. For $u\in V$, we say that $v$ is a {\em $t$-light neighbor} of $u$ if the edge $\{u,v\}$ is contained in a  $t$-light initialization of $G$.

The following lemma was shown in \cite[Theorem 5]{ABSKS20}.
\begin{lemma} [\cite{ABSKS20}]\label{lem:a_lot_of_neighbors}
Let $G = (V, E, w)$ be an undirected weighted graph, and $H$ a $t$-light initialization of $G$. If
$P_{u,v}$ is some shortest path in $G$ that is missing $\ell$ edges in $H$, then there is a set of vertices $S\subseteq V$ such that:
\begin{enumerate}
	\item $|S| = \Omega(t\ell)$.
	\item Each vertex of $S$ has a $t$-light neighbor in $P_{u, v}$, with edge weight at most $W_{u,v}$.
\end{enumerate}
\end{lemma}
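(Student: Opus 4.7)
My plan is to construct $S$ as a union of per-missing-edge contributions, then bound overlaps using the shortest-path structure of $P_{u,v}$. For each missing edge $e_i = \{x_i, y_i\}$ on $P_{u,v}$ (with $x_i$ preceding $y_i$), let $S_i$ be the set of $t$ lightest neighbors of $x_i$ in $G$. Since $H$ contains the $t$ lightest edges incident to every vertex yet $e_i \notin H$, every $z \in S_i$ satisfies $\{x_i, z\} \in H$ and $w(x_i, z) \leq w(e_i) \leq W_{u,v}$. Hence $|S_i| \geq t$, and each $z \in S_i$ has $x_i \in P_{u,v}$ as a $t$-light neighbor via an edge of weight at most $W_{u,v}$. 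Taking $S = \bigcup_{i=1}^{\ell} S_i$ gives condition (2) immediately, with $\sum_i |S_i| \geq t\ell$.

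For condition (1), I would bound how many of the sets $S_i$ a single vertex $v$ can lie in. The key is a detour argument: if $v \in S_i \cap S_j$ with $i < j$, then the path $x_i \to v \to x_j$ has weight at most $w(e_i) + w(e_j)$, and since $P_{u,v}$ (and all of its sub-paths) are shortest paths, the sub-path of $P_{u,v}$ from $x_i$ to $x_j$ has length at most $w(e_i) + w(e_j)$. This sub-path contains the missing edges $e_i, e_{i+1}, \ldots, e_{j-1}$, which yields
\[
\sum_{k=i+1}^{j-1} w(e_k) \leq w(e_j) \leq W_{u,v}.
\]
Together with the normalization $w(e_k) \geq 1$, this constrains the span of indices in which a single vertex $v$ can participate.

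The main obstacle is that the inequality above, applied naively to the extreme indices, only caps a vertex's overlap at $O(W_{u,v})$ instead of a constant (in the unweighted case this is immediate since hop-distance equals the number of edges). To recover $|S| = \Omega(t\ell)$, I would either bucket the missing edges dyadically by weight — within each bucket the displayed inequality forces the per-vertex overlap to be $O(1)$, and the most populous bucket still accounts for $\Omega(\ell)$ missing edges up to factors absorbable into the hidden constant — or augment $S_i$ with the $t$ lightest neighbors of $y_i$ as well; the strengthened detour $x_i \to v \to y_j$ then includes both $e_i$ and $e_j$ in the sub-path, giving $\sum_{k=i}^{j} w(e_k) \leq w(e_i) + w(e_j)$ and forcing $j \in \{i, i+1\}$. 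Combining these observations produces a constant overlap bound and hence the desired $|S| = \Omega(t\ell)$.
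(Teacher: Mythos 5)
First, be aware that the paper does not prove this statement at all: \lemmaref{lem:a_lot_of_neighbors} is imported from \cite{ABSKS20} (their Theorem 5), with only the parenthetical remark that the lightness of the connecting edges is implicit in their proof. So your argument has to stand on its own, and it does not. The incidence count $\sum_i |S_i|\ge t\ell$ and the per-edge properties of $S_i$ are fine, but neither of your two routes to an $O(1)$ per-vertex overlap works. For the second route, membership $v\in S_i\cap S_j$ only yields the strong detour $x_i\to v\to y_j$ when $v$ is a light neighbor of the \emph{left} endpoint of $e_i$ and of the \emph{right} endpoint of $e_j$; if instead $v$ is adjacent to $y_i$ and to $x_j$, the subpath between the two attachment points contains neither $e_i$ nor $e_j$, and you are back to the weak inequality $\sum_{k=i+1}^{j-1}w(e_k)\le w(e_i)+w(e_j)$, which bounds nothing when the two extreme edges are heavy. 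This case is not repairable by finer case analysis: take missing edges of weights, say, $1.5,\,16,\,100,\,1000,\dots$ separated by present unit-weight edges, and a vertex $z$ attached by a weight-$1$ edge to the right endpoint of the first missing edge and, to the left endpoints of the later ones, by edges of weight roughly the length of the preceding subpath (still below the corresponding $w(e_j)$, and one can arrange the other light edges at those vertices so that $z$ is among their $t$ lightest neighbors). All shortest-path constraints are satisfied, yet $z$ lies in $\Theta(\log W_{u,v})$ of your sets $S_i$, so the claimed conclusion ``forcing $j\in\{i,i+1\}$'' is false for the sets you define.

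Your first route (dyadic bucketing) is internally consistent but proves a weaker statement: inside one bucket the overlap is indeed $O(1)$, but there are $\Theta(\log W_{u,v})$ buckets, so the most populous bucket is only guaranteed $\Omega(\ell/\log W_{u,v})$ missing edges, giving $|S|=\Omega(t\ell/\log W_{u,v})$. That factor is not ``absorbable into the hidden constant'': $W_{u,v}$ is unbounded, and the lemma is applied with an absolute constant, independent of the weights, to produce $\Omega(\ell\cdot n^{2/3})$ set-off/improved pairs in \theoremref{thm:6eps_span} and again inside \lemmaref{lem:small_num_paths}. So as written you establish only a $\log W$-weaker form of the statement. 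Closing the gap requires a genuinely different accounting of how witnesses can be shared across missing edges of very different weights; that is precisely the content of the proof in \cite{ABSKS20}, which is what the paper relies on and what you should consult rather than trying to transplant the unweighted ``constant window'' argument.
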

(The fact that {\em light} edges are connecting $S$ to $P_{u, v}$ did not appear explicitly in \cite{ABSKS20}, but it follows directly from their proof.)

We will also use the construction of the greedy spanner multiplicative spanners \cite{DBLP:journals/dcg/AlthoferDDJS93}.

\begin{lemma} (\cite{DBLP:journals/dcg/AlthoferDDJS93}) \label{lem:multspan}
Let $G = (V, E, w)$ be an undirected weighted graph, and fix a parameter $k \ge 1$. There exists a $(2k-1, 0)$-spanner of size $O(n ^{1 + 1 / k})$.
\end{lemma}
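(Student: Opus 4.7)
The plan is to use the classical greedy spanner construction of~\cite{DBLP:journals/dcg/AlthoferDDJS93}. Assume $k$ is a positive integer and sort the edges of $G$ in non-decreasing order of weight, breaking ties by any fixed rule. Process the edges one at a time: for each $e = \{u, v\}$ of weight $w(e)$, add $e$ to the (initially empty) spanner $H$ if the current distance $d_H(u, v)$ is strictly greater than $(2k - 1) \cdot w(e)$, and otherwise skip it. Return $H$.

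For the stretch, observe that immediately after processing each edge $e = \{u, v\} \in E$, we have $d_H(u, v) \le (2k - 1) \cdot w(e)$: either $e$ was added, giving $d_H(u, v) \le w(e)$, or the greedy test already failed. For an arbitrary pair $x, y \in V$, take the shortest $x$-$y$ path in $G$ and apply the triangle inequality edge by edge; summing the per-edge bound along this path gives $d_H(x, y) \le (2k - 1) \cdot d_G(x, y)$, which is the desired $(2k-1,0)$ guarantee.

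For the size, the key step is to show that the resulting $H$ has girth strictly greater than $2k$. Suppose for contradiction that the addition of some edge $e = \{u, v\}$ of weight $w$ closes a cycle on at most $2k$ edges in $H$. Removing $e$ from this cycle leaves a $u$-$v$ path of at most $2k - 1$ edges, all of which were added before $e$ and hence (by the sort order) have weight at most $w$. Thus, just before $e$ was added, there was a $u$-$v$ path in $H$ of total length at most $(2k - 1) w$, contradicting the greedy rule that $e$ is added only when $d_H(u, v) > (2k - 1) w$. Once girth exceeds $2k$, the standard Moore-type extremal bound (provable via a short BFS counting argument) yields $|E(H)| = O(n^{1 + 1/k})$.

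The main obstacle is essentially bookkeeping: handling tied edge weights (resolved by any fixed tie-breaking rule, so that the ``processed before $e$ implies weight at most $w(e)$'' step in the girth argument still holds) and invoking the extremal bound on edges in a graph of girth exceeding $2k$ for integer $k$. With these in hand, the stretch and size arguments combine directly to yield the claim.
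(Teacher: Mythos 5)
Your proof is correct and is essentially the classical argument behind the result the paper cites from \cite{DBLP:journals/dcg/AlthoferDDJS93} (which the paper quotes without proof): the greedy edge-by-edge insertion, the per-edge stretch bound summed along a shortest path, the girth-greater-than-$2k$ argument, and the Moore-type bound on edges of high-girth graphs. No gaps beyond the standard (and correctly flagged) assumptions that $k$ is an integer and ties are broken consistently.
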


The following standard lemma asserts that sampling a random set $S$ of vertices with the appropriate density, will guarantee with high probability (w.h.p.) that for every $u \in V$: either all of its neighbors are in a $t$-light initialization, or $u$ has a light neighbor in $S$.

\begin{lemma}\label{lem:emu:dominating}
	Let $G = (V, E, w)$ be an undirected weighted graph and let $H$ be a $(2 n^{\varepsilon} \ln n)$-light initialization of $G$ for some $0\le\varepsilon\le 1$.
	Let $S\subseteq V$ be a random set, created by sampling each vertex independently with probability $\frac{1}{n^\varepsilon}$.
	Then with probability at least $1-1/n$, for every vertex $u$ having at least $2n^{\varepsilon} \ln n$ neighbors in $G$, there exists $y \in S$ s.t. $y$ is a $(2n^{\varepsilon} \ln n)$-light neighbor of $u$.
	
\end{lemma}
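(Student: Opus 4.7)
The plan is a standard union-bound argument combined with a routine sampling calculation. Fix any vertex $u\in V$ with at least $t := 2n^{\varepsilon}\ln n$ neighbors in $G$, and let $N_t(u)\subseteq V$ denote the set of endpoints of the $t$ lightest edges incident on $u$ (breaking ties consistently with how $H$ was built). By the definition of a $t$-light initialization, every vertex in $N_t(u)$ is a $t$-light neighbor of $u$, and $|N_t(u)| = t$.

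Next, I would bound the probability that $S$ misses $N_t(u)$ entirely. Since the sampling is independent across vertices, each with probability $1/n^{\varepsilon}$, this probability is
\[
  \left(1 - \frac{1}{n^{\varepsilon}}\right)^{t} \;\le\; \exp\!\left(-\frac{t}{n^{\varepsilon}}\right) \;=\; \exp(-2\ln n) \;=\; \frac{1}{n^{2}},
\]
using the standard inequality $1-x\le e^{-x}$. So with probability at least $1-1/n^{2}$, there is some $y\in S\cap N_t(u)$, and any such $y$ is a $t$-light neighbor of $u$ by construction.

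Finally, I would apply a union bound over the (at most $n$) vertices $u$ under consideration, giving a total failure probability of at most $n\cdot n^{-2}=1/n$, which yields the claimed success probability $1-1/n$.

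There is essentially no obstacle here; the only point worth checking explicitly is that the $t$ lightest neighbors of $u$ are indeed included in $H$ (so that they qualify as $t$-light neighbors), which is immediate from the definition of the $t$-light initialization. The hypothesis that $\deg(u)\ge t$ is used precisely to guarantee that $N_t(u)$ has size $t$ so that the exponent in the tail bound gives the $e^{-2\ln n}$ decay.
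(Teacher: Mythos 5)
Your proof is correct and follows essentially the same route as the paper: bound the probability that $S$ misses the $t$ lightest neighbors of a fixed high-degree vertex by $(1-1/n^{\varepsilon})^{2n^{\varepsilon}\ln n}\le 1/n^{2}$, then union bound over at most $n$ such vertices. Your added remark that the $t$ lightest neighbors qualify as $t$-light neighbors by definition is a fine (and implicit in the paper) observation, but nothing differs in substance.
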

\begin{proof}
	Let $U$ be the set of vertices with degree at least $2n^{\varepsilon} \ln n$ in $G$.
	Fix $u \in U$, and denote by $X_u$ the event that there exists $y \in S$ which is a $(2n^{\varepsilon} \ln n)$-light neighbor of $u$.
	Every vertex is sampled to $S$ independently with probability $\frac{1}{n^{\varepsilon}}$, hence
	\[
	\Pr[\bar{X_u}] = \left(1- \frac{1}{n^{\varepsilon}}\right)^{2n^{\varepsilon}\ln n} \le (1/e)^{2 \ln n} = (1/n)^2.
	\]
	
	Let $X$ be the event that for every $u \in U$, the event $X_u$ occur. By the union bound,
	\[
	\Pr[\bar{X}] \le \sum_{u \in U} \Pr[\bar{X_u}] \le |U| /n^2 \le 1/n.
	\]
	

\end{proof}

\section{A $+(6 + \varepsilon)W$ spanner}\label{sec:6span}

\paragraph{Construction.}
Our algorithm for a $+(6 + \varepsilon)W$ spanner works as follows. Initially, $H$ is set as a $n^{1/3}$-light initialization of $G$. Next, sort all the pairs $u,v\in V$: first according to $W_{u,v}$, and then by $d_G(u,v)$ (from small to large), breaking ties arbitrarily.
Then, go over all pairs in this order; when considering $u,v$, we add $P_{u, v}$ to $H$ if
\begin{align}
	\label{eq:when_puv_added}
d_H(u, v) > d_G(u, v) + (6 + \varepsilon)W_{u, v}.
\end{align}

\paragraph{Analysis.} Our main technical lemma below asserts that by adding a shortest path to $H$, we get for many pairs of the path's neighbors: 1) a good initial guarantee, and also 2) sufficiently improve their distance in $H$.

\begin{figure}[h]
	\centering
	\includegraphics{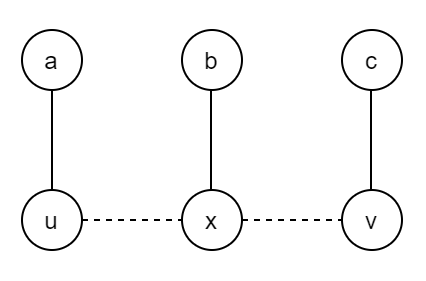}	
	\caption{An illustration for \lemmaref{lem:improved_dist}. The dotted line is $P_{u, v}$, and the edges $\{a, u\}, \{b, x\}, \{c, v\}$ are all light. It is possible that $u = x$ or $v = x$.}
	\label{fig:lem:improved_dist}
\end{figure}

\begin{lemma}\label{lem:improved_dist}
	Let $u, v \in V$ be two vertices for which the path $P_{u, v}$ was added to $H$, and take any $x\in P_{u,v}$. Let $a, b, c \in V$ be different $n^{1/3}$-light neighbors of $u, x, v$, respectively, with edge weights at most $W_{u,v}$.
	Denote by $H_0$ the spanner just before $P_{u, v}$ was added and by $H_1$ the spanner right after the path was added. Then both of the following holds.
	\begin{enumerate}
		
		\item $
		d_{H_1}(a, b) \le d_G(a, b) + 4 W_{u, v}
		\textbf{ and }
		d_{H_1}(b,c) \le d_G(b,c) + 4 W_{u, v}$.
		\item
		$
		d_{H_1}(a, b) \le d_{H_0}(a, b) - \frac{\varepsilon}{2} W_{u, v}
		\textbf{ or }
		d_{H_1}(b,c) \le d_{H_0}(b,c) - \frac{\varepsilon}{2} W_{u, v}
		$.

		\label{lem:improved_distance}
	\end{enumerate}
\end{lemma}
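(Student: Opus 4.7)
The plan is to establish both claims by constructing explicit alternative paths in $H_0$ and $H_1$ that route through the light edges $\{a,u\}, \{b,x\}, \{c,v\}$, which lie in the $n^{1/3}$-light initialization (hence in $H_0 \subseteq H_1$), and through suitable subpaths of $P_{u,v}$, exploiting that every subpath of a shortest path is itself shortest. For Part 1, I would bound $d_{H_1}(a,b)$ by the length of the concatenation $a \to u \to x \to b$ in $H_1$: this equals $w(a,u) + d_G(u,x) + w(x,b) \le 2W_{u,v} + d_G(u,x)$. One application of the triangle inequality in $G$ then gives $d_G(u,x) \le d_G(a,b) + 2W_{u,v}$, so $d_{H_1}(a,b) \le d_G(a,b) + 4W_{u,v}$; the bound for $(b,c)$ is symmetric. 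This step is routine.

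For Part 2, I would argue by contradiction. Suppose both improvements fail, so $d_{H_0}(a,b) - d_{H_1}(a,b) < \frac{\varepsilon}{2} W_{u,v}$ and analogously for $(b,c)$. Combining these with the Part-1-style upper bounds $d_{H_1}(a,b) \le d_G(u,x) + 2W_{u,v}$ and $d_{H_1}(b,c) \le d_G(x,v) + 2W_{u,v}$ yields $d_{H_0}(a,b) < d_G(u,x) + (2 + \tfrac{\varepsilon}{2}) W_{u,v}$ and $d_{H_0}(b,c) < d_G(x,v) + (2 + \tfrac{\varepsilon}{2}) W_{u,v}$. I would then bound $d_{H_0}(u,v)$ by the length of the walk $u \to a \to b \to c \to v$ in $H_0$, namely $w(u,a) + d_{H_0}(a,b) + d_{H_0}(b,c) + w(c,v)$. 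Using $d_G(u,x) + d_G(x,v) = d_G(u,v)$ (because $x \in P_{u,v}$), this sum is strictly less than $d_G(u,v) + (6+\varepsilon) W_{u,v}$, directly contradicting the rule \eqref{eq:when_puv_added} under which $P_{u,v}$ was added.

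The main obstacle is simply lining up the constants precisely: the two light-edge hops at the ends, the two triangle-inequality detours (each contributing $2W_{u,v}$), and the two $\frac{\varepsilon}{2} W_{u,v}$ slacks for the allowed improvements must total exactly $(6+\varepsilon) W_{u,v}$, which is exactly why the threshold in the addition rule is calibrated to this value. A minor bookkeeping check is that $a, b, c$ are assumed distinct so the walk $u \to a \to b \to c \to v$ is a legitimate route in $H_0$, and that the degenerate cases $u = x$ or $x = v$ do not cause trouble (the walk simply shortens but the same bounds hold).
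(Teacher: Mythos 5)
Your proposal is correct and follows essentially the same route as the paper's proof: Part 1 via the path $a\to u\to x\to b$ plus one triangle-inequality detour in $G$, and Part 2 by contradiction, combining the intermediate bounds $d_{H_1}(a,b)\le d_G(u,x)+2W_{u,v}$ and $d_{H_1}(b,c)\le d_G(x,v)+2W_{u,v}$ with the failed-improvement assumptions and the walk $u\to a\to b\to c\to v$ to contradict the addition rule \eqref{eq:when_puv_added}. The constant bookkeeping you describe matches the paper exactly.
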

\begin{proof}
Fix $P_{u,v}$ and $a,b,c$ as defined in the Lemma, see also \figureref{fig:lem:improved_dist}.
	We begin by proving the first item, using the triangle inequality and the fact that the three edges $\{a,u\},\{b,x\},\{c,v\}$ all appear in $H_1$ (since they are $n^{1/3}$-light), and have weight at most $W_{u,v}$.
	\begin{eqnarray}
		d_{H_1}(a, b)\nonumber
		&\le& d_{H_1}(a, u) + d_{H_1}(u, x) + d_{H_1}(x, b) \\\label{eq:a,b}
		&=& w(a, u) + d_{G}(u, x) + w(x, b) \\\nonumber
		&\le& w(a, u) + d_{G}(u, a) +d_G(a, b) + d_G(b, x) + w(x, b)\\\nonumber
		&\le& d_G(a, b) + 4 W_{u, v}.
	\end{eqnarray}	
	The bound on $d_{H_1}(b,c)$ follows in a symmetric manner, which concludes the proof of the first item.
	Seeking contradiction, assume that the second item does not hold. This suggests that
	\[
	d_{H_0}(a, b)< d_{H_1}(a, b)+ \frac{\varepsilon}{2} W_{u, v}\stackrel{\eqref{eq:a,b}}{\le} d_G(u, x) + (2 + \frac{\varepsilon}{2})W_{u, v}~,
	\]

	and also
	\[
	d_{H_0}(b,c) <  d_{H_1}(b,c)+ \frac{\varepsilon}{2} W_{u, v}\le d_G(x, v) + (2 + \frac{\varepsilon}{2})W_{u, v}~.
	\]
	 So we have that
	\begin{align*}
		d_{H_0}(u, v)
		&\le d_{H_0}(u, a) + d_{H_0}(a, b) + d_{H_0}(b, c) + d_{H_0}(c, v) \\
		&< w(u, a) + d_{G}(u, x) + (2 + \frac{\varepsilon}{2})W_{u, v}  + d_{G}(x, v) +(2 + \frac{\varepsilon}{2})W_{u, v} + w(c, v) \\
		&\le d_G(u, v) + (6 + \varepsilon)W_{u, v},
	\end{align*}
	which is a contradiction to \eqref{eq:when_puv_added}, since we assumed that the path $P_{u, v}$ was added to the spanner.


\end{proof}

\begin{theorem}
	\label{thm:6eps_span}
	For every undirected weighted graph $G= (V, E, w)$ and $\varepsilon>0$, there exists a deterministic polynomial time algorithm that produces a $+(6 + \varepsilon)W$ spanner of size $O(\frac{1}{\varepsilon}\cdot n^{4/3})$.
\end{theorem}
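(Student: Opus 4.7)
The plan is to separate the stretch bound and the size bound. The stretch bound is immediate from the termination condition of the greedy loop: for every pair $(u,v)$, either the algorithm added $P_{u,v}$ (in which case $d_H(u,v) = d_G(u,v)$ thereafter), or the negation of \eqref{eq:when_puv_added} already held at the moment $(u,v)$ was examined. Since $d_H(\cdot,\cdot)$ only decreases as edges are added, the inequality $d_H(u,v) \le d_G(u,v) + (6+\varepsilon)W_{u,v}$ survives to the final spanner.

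For the size, the $n^{1/3}$-light initialization contributes $O(n^{4/3})$ edges, so the real work is to bound $\sum_{(u,v)\text{ added}} \ell_{u,v}$, where $\ell_{u,v}$ is the number of edges of $P_{u,v}$ missing from $H$ at the time of its addition. The main idea is to combine \lemmaref{lem:a_lot_of_neighbors}, which supplies a set $S_{u,v}$ of $\Omega(n^{1/3}\ell_{u,v})$ vertices each light-connected to $P_{u,v}$ by an edge of weight at most $W_{u,v}$, with \lemmaref{lem:improved_dist}, applied to triples $(a,b,c)$ where $a=a(u)$ and $c=c(v)$ are canonical (say smallest-id) light neighbors of $u$ and $v$ and $b \in S_{u,v}\setminus\{a,c\}$. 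For each such $b$ we obtain, on one of the pairs $(a,b)$ or $(b,c)$, simultaneously a post-addition gap bound of $4W_{u,v}$ and a one-shot distance improvement of at least $\tfrac{\varepsilon}{2}W_{u,v}$.

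The weight ordering of the greedy loop then drives a budget argument for each fixed pair. Once a pair is first improved, at some weight $W_1$, its gap $d_H - d_G$ drops to at most $4W_1$; every subsequent improvement of the same pair happens at some weight $W' \ge W_1$ and shrinks the gap by at least $\tfrac{\varepsilon}{2}W_1$. Since the gap must remain nonnegative, each pair is improved at most $1 + \tfrac{8}{\varepsilon} = O(1/\varepsilon)$ times. Putting this together with the $\Omega(n^{1/3}\ell_{u,v})$ improvements produced per added path, and summing over all added paths, gives an inequality that bounds $\sum \ell_{u,v}$; adding the initialization then yields the desired total size.

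The hard part, and the step where real care is needed, is obtaining a sufficiently tight count of the pool of pairs that can ever be improved. A crude $n^2$ bound on this pool yields only $\sum\ell_{u,v} = O(n^{5/3}/\varepsilon)$, falling short of the advertised $O(n^{4/3}/\varepsilon)$. To recover the right bound I would exploit that both endpoints of any improved pair are light-endpoints of the currently added path (so the pool of improvable pairs is effectively tied to the $O(n^{4/3})$ light edges from the initialization), and use the symmetric double-charging enabled by the $a(u)$ and $c(v)$ roles in \lemmaref{lem:improved_dist} to evenly distribute the load. Pinning down this refined accounting is the crux of the size analysis.
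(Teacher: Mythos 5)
Your skeleton matches the paper's (trivial stretch from the greedy test, $n^{4/3}$ edges from the light initialization, a per-pair budget of $O(1/\varepsilon)$ improvements after the first time the $+4W$ guarantee holds, and a per-path counting of improvements via \lemmaref{lem:a_lot_of_neighbors} and \lemmaref{lem:improved_dist}), but the size analysis has a genuine gap exactly where you flag it, and the fix you sketch does not work. By fixing canonical neighbors $a=a(u)$ and $c=c(v)$ you only harvest one improved pair per choice of $b$, i.e.\ $\Omega(\ell\, n^{1/3})$ improvements per added path, and as you note this gives only $O(n^{5/3}/\varepsilon)$ against the pool of all pairs. Your proposed repair---that the pool of improvable pairs is ``tied to the $O(n^{4/3})$ light edges''---is unsubstantiated and false as stated: an improved pair $(a,b)$ consists of a light neighbor of $u$ and a light neighbor of some $x\in P_{u,v}$, which is generally not an edge of the graph at all, and over the run of the algorithm such pairs can range over $\Theta(n^2)$ distinct vertex pairs. (Even if some pool reduction were available, you would need pool size $O(n^{5/3})$ to close the gap with only $n^{1/3}$ improvements per missing edge, and no argument for that is given; charging to the light edge $(x,b)$ fails because that edge can serve many different pairs $(a(u),b)$, so the $O(1/\varepsilon)$ budget does not transfer to it.)

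The paper closes the gap differently: it keeps the crude $n^2$ pool but boosts the per-path count to $\Omega(\ell\, n^{2/3})$ by letting $a$ range over \emph{all} of at least $n^{1/3}$ light neighbors of $u$ and $c$ over all of at least $n^{1/3}$ light neighbors of $v$ (with edge weights at most $W_{u,v}$). The key dichotomy comes from item~2 of \lemmaref{lem:improved_dist}: for a fixed $b$, if even one pair $(a_0,b)$ is not improved, then applying the lemma to the triples $(a_0,b,c)$ for every choice of $c$ forces \emph{all} pairs $(b,c)$ to be improved; hence each $b$ contributes at least $n^{1/3}$ improved pairs, for a total of $\Omega(\ell\, n^{1/3}\cdot n^{1/3})$ per path, and then $\Omega(t\, n^{2/3})\le O(n^2/\varepsilon)$ gives $t=O(n^{4/3}/\varepsilon)$. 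Two further points you would need and currently skip: (i) the availability of $n^{1/3}$ suitable choices for $a$ and $c$ is not automatic---the paper proves that the first and last edges of any added path are missing from $H_0$, using the \emph{secondary} ordering of pairs by $d_G(u,v)$ in \eqref{eq:when_puv_added} (if the first edge were present, the pair $(u_1,v)$ would have been processed and fixed earlier, yielding a contradiction); and (ii) the counted improvements must be distinct pairs per step and must occur at or after set-off so that the $O(1/\varepsilon)$ budget applies, which item~1 of \lemmaref{lem:improved_dist} guarantees. With your canonical-$a,c$ variant these ingredients alone cannot recover $O(n^{4/3}/\varepsilon)$, so the argument as proposed does not establish the theorem.
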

\begin{proof}

Our construction algorithm adds a shortest path between pairs whose stretch is larger than $+(6 + \varepsilon)W$, so we trivially get a $+(6 + \varepsilon)W$ spanner (the running time can be easily checked to be polynomial in $n$). Thus, we only need to bound the number of edges. Starting with the $n^{1/3}$-light initialization introduces at most $n^{4/3}$ edges to the spanner, so it remains to bound the number of edges added by adding the shortest paths.

Let $u, v \in V$ be two vertices for which the path $P_{u, v}$ was added to the spanner. Consider the time in which this path was added, let $H_0$ be the spanner just before the addition of $P_{u,v}$, and $H_1$ after the addition. We say that a pair of vertices $a,b\in V$ is {\em set-off} at this time, if it is the first time that $d_{H_1}(a, b) \le d_G(a, b) + 4 W_{u, v}$, and it is {\em improved} if $d_{H_1}(a, b) \le d_{H_0}(a, b) - \frac{\varepsilon}{2} W_{u, v}$. The main observation is that once a pair is set-off, it can be improved at most $O(\frac{1}{\varepsilon})$ times. To see this, note that after the set-off we have $d_H(a,b)-d_G(a,b)\le 4W_{u,v}$, and recall that we ordered the pairs by their maximal weight $W_{u,v}$, so any future improvement will be at least by $\frac{\varepsilon}{2} W_{u, v}$. Since at the end we must have $d_H(a,b)\ge d_G(a,b)$, there can be at most $O(\frac{1}{\varepsilon})$ improvements.

We will show that if $\ell$ edges of $P_{u, v}$ are missing in $H_0$, then at least $\Omega(\ell\cdot n^{2/3})$ pairs are either set-off or improved. Fix any $x\in P_{u,v}$, and
let $a, b, c \in V$ be different $n^{1/3}$-neighbors of $u, x, v$, respectively, connected by edges of weight at most $W_{u,v}$. Apply \lemmaref{lem:improved_distance} on $u,v,x$ and $a, b, c$. 
We get that both pairs $(a,b)$ and $(b,c)$ are set-off (if they haven't before), and at least one of them is improved.

The final goal is to show that there are $\Omega(\ell\cdot n^{2/3})$ such set-off/improving pairs. We first claim that the first and last edges of $P_{u,v}$ are missing in $H_0$. Seeking contradiction, assume that the first edge $\{u,u_1\}\in E(H_0)$, then the pair $u_1,v$ has $W_{u_1,v}\le W_{u,v}$ and $d_G(u_1,v)<d_G(u,v)$ (using that the sub-path of $P_{u,v}$ from $u_1$ to $v$ is the shortest path between $u_1,v$), and its stretch must be larger than $+(6 + \varepsilon)W_{u,v}$ (otherwise $u,v$ will have stretch at most $+(6 + \varepsilon)W_{u,v}$ as well), so we should have considered the pair $u_1,v$ before $u,v$, and added $P_{u_1,v}$ to $H$. That would produce a shortest path between $u,v$, which yields a contradiction to \eqref{eq:when_puv_added}. A symmetric argument shows that the last edge is missing too.

Now, since $H_0$ contains a $n^{1/3}$-light initialization, but $u$ (resp., $v$) has a missing edge, it follows that $u$ (resp., $v$) has at least $n^{1/3}$ neighbors that are all lighter than the missing first (resp., last) edge of $P_{u,v}$, and thus of weight at most $W_{u,v}$. So there are at least $n^{1/3}$ choices for $a$ and for $c$. By \lemmaref{lem:a_lot_of_neighbors} there are at least $\Omega(\ell\cdot n^{1/3})$ choices for $b$. We conclude that there are at least $\Omega(\ell\cdot n^{1/3} \cdot n^{1/3}) = \Omega(\ell\cdot n^{2/3})$ pairs that are set-off/improved.

Let $t$ be the number of edges added by all paths. Since every pair can be set-off only once, and improved $O(\frac{1}{\varepsilon} )$ times, we get the following inequality

\[
	\Omega(t\cdot n^{2/3}) \le O(\frac{n^2}{\varepsilon} )~,
\]

thus $t = O(\frac{ n^{4/3}}{\varepsilon})$.
\end{proof}


\section{A $+\tO(n^{\frac{1 - \varepsilon}{2}}W)$ spanner of size $O(n^{1 + \varepsilon})$}\label{sec:poly}

Let $G=(V,E,w)$ be a weighted undirected graph with $n$ vertices, and let $0\le\varepsilon\le 1$ be a parameter. We will now present our $+O(n^{\frac{1-\varepsilon}{2}} \log n)$ spanner of size $O(n^{1 + \varepsilon})$.

\paragraph*{Construction.}
Let $H$ be a $(n^\varepsilon)$-light initialization of $G$.
We then add the edges of $(\log n, 0)$-greedy spanner from Lemma \ref{lem:multspan} to $H$. Next, we sort all the pairs $u, v \in V$ by $W_{u, v}$ in increasing order (breaking ties arbitrarily). For each pair $(u, v)$ we add $P_{u, v}$ if 
\begin{align}
\label{eq:poly_span}
	d_H(u, v) > d_G(u, v) + c \cdot n^{\frac{1- \varepsilon}{2}} \log n\cdot W_{u,v},
\end{align}
where $c$ is a constant to be determined.

\paragraph*{Analysis.} By the last step of the algorithm, every pair will have stretch $O(n^{\frac{1- \varepsilon}{2}} \log n\cdot W)$. The number of edges added by the $(n^\varepsilon)$-light initialization of $G$ is at most $n^{1+\varepsilon}$, and the $(\log n, 0)$-greedy spanner from Lemma \ref{lem:multspan} has $O(n)$ edges. The main difficulty of the analysis lies in bounding the number of edges in the paths added by the algorithm. Denote by $\PP$ the set of paths added in the last stage. We start by bounding the number of such paths.

\begin{lemma}\label{lem:small_num_paths}
	$|\PP| \le n^{\frac{1-\varepsilon}{2}}.$
\end{lemma}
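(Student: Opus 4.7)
The plan is to use a labeling argument: define a labeling of the vertices of $V$ such that each path $P_{u,v} \in \PP$ labels at least $n^{(1+\varepsilon)/2}$ previously unlabeled vertices. Since each vertex can be labeled at most once and $|V|=n$, this immediately yields $|\PP| \le n^{(1-\varepsilon)/2}$. The constant $c$ appearing in the trigger condition \eqref{eq:poly_span} will be chosen large enough to make the constants in the labeling argument work out to exactly this bound.

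Specifically, I would process the paths of $\PP$ in the order of addition; when $P_{u,v}$ is processed, I label every vertex $y$ that has an $(n^\varepsilon)$-light neighbor on $P_{u,v}$ via an edge of weight at most $W_{u,v}$, provided $y$ has not already been labeled by an earlier path. The first step is to show that the number of candidates is large: if $\ell$ denotes the number of missing edges of $P_{u,v}$ in $H$ at the time $P_{u,v}$ is considered, then by \lemmaref{lem:a_lot_of_neighbors} with $t = n^\varepsilon$ there are $\Omega(n^\varepsilon \ell)$ candidate vertices. A lower bound on $\ell$ follows by combining two observations: (i) an argument analogous to the one inside the proof of \theoremref{thm:6eps_span} shows that the first and last edges of $P_{u,v}$ must be missing (otherwise a lighter-$W$ or shorter pair would have triggered addition of a shortest path covering $P_{u,v}$ first), and (ii) the $(\log n,0)$-greedy spanner is already in $H$, so if $P_{u,v}$ triggered \eqref{eq:poly_span} we must have $d_G(u,v) = \Omega(n^{(1-\varepsilon)/2}\, W_{u,v})$, forcing $P_{u,v}$ to contain at least $\Omega(n^{(1-\varepsilon)/2})$ edges (and hence, by missing-endpoints, a comparable number of missing edges). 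Together this gives $\Omega(n^\varepsilon \cdot n^{(1-\varepsilon)/2}) = \Omega(n^{(1+\varepsilon)/2})$ candidate vertices.

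The main obstacle is showing that a large fraction of these candidates are actually \emph{unlabeled}, i.e.\ not already covered by some earlier path in $\PP$. I would prove this by contradiction: suppose that out of the $\Omega(n^\varepsilon\ell)$ candidates for $P_{u,v}$, all but fewer than $n^{(1+\varepsilon)/2}$ have already been labeled by earlier paths $P_{u', v'}$ with $W_{u',v'}\le W_{u,v}$. Then for each missing edge of $P_{u,v}$, there is a candidate vertex $y$ incident to one of its endpoints via a light edge of weight $\le W_{u,v}$ that is already labeled, which by definition means $y$ is a light neighbor of some vertex on a previously added path $P^{y}\in\PP$ (whose edges are all already in $H$) via another light edge of weight $\le W_{u,v}$. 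Using these two light edges to jump to $P^{y}$, traversing a short portion of $P^{y}$, and then invoking the $(\log n, 0)$-greedy spanner to bridge back to the far endpoint of the missing edge, yields a detour of weight $O(\log n \cdot W_{u,v})$ per missing edge. Summing over at most $\ell = O(n^{(1-\varepsilon)/2}\log n / \log n)$ such detours (after choosing $c$ to absorb constants) gives an alternative $u$--$v$ path in $H$ of length at most $d_G(u,v) + c\cdot n^{(1-\varepsilon)/2}\log n\cdot W_{u,v}$, contradicting \eqref{eq:poly_span}.

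Thus every $P_{u,v}\in\PP$ labels at least $n^{(1+\varepsilon)/2}$ fresh vertices; since labels are disjoint across paths and there are only $n$ vertices in total, $|\PP|\le n^{(1-\varepsilon)/2}$. The hardest part of making this rigorous will be bookkeeping the detour: one must track that the previous paths used to reroute really are in $H$, that their endpoints are reachable via a single light edge of weight $\le W_{u,v}$ from both $y$ and the path $P_{u,v}$, and that the multiplicative spanner only needs to be invoked over a distance where its $\log n$ blowup remains absorbed into the $\log n$ factor of the stretch budget.
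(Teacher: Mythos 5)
Your high-level skeleton matches the paper's (label vertices per added path, show each path labels at least $n^{\frac{1+\varepsilon}{2}}$ fresh vertices, derive the bound by disjointness), but the core of the argument --- why few fresh labels would yield a low-stretch alternative $u$--$v$ path contradicting \eqref{eq:poly_span} --- has two genuine gaps. First, from ``all but fewer than $n^{\frac{1+\varepsilon}{2}}$ candidates are already labeled'' you infer that \emph{every} missing edge has an already-labeled light neighbor. This does not follow: up to $n^{\frac{1+\varepsilon}{2}}$ unlabeled candidates can be concentrated on the neighborhoods of many missing edges (roughly $n^{\frac{1-\varepsilon}{2}}$ of them, each with about $n^{\varepsilon}$ light neighbors), so a large set of missing edges may have entirely unlabeled neighborhoods. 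The paper handles these separately (its ``first type''): their number is at most $O(n^{\frac{1-\varepsilon}{2}})$, because otherwise, by (the proof of) \lemmaref{lem:a_lot_of_neighbors}, they alone would force $\Omega(n^{\frac{1+\varepsilon}{2}})$ fresh labels; each such edge is then bridged by the $(\log n,0)$-spanner.

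Second, and more fatally, your detour accounting is \emph{per missing edge}: a cost of $O(\log n\cdot W_{u,v})$ for each of the $\ell$ missing edges. There is no upper bound $\ell=O(n^{\frac{1-\varepsilon}{2}})$ --- indeed your own argument lower-bounds $\ell$ by $\Omega(n^{\frac{1-\varepsilon}{2}})$, and $\ell$ can be polynomially larger --- so the summed detour need not violate \eqref{eq:poly_span} and no contradiction is reached. (Jumping to an earlier path $P^y$ and ``traversing a short portion'' buys nothing if you still pay a spanner hop at every missing edge.) The paper's key idea is to charge detours \emph{per distinct label}, not per missing edge: upon the first missing edge carrying label $\{x,y\}$, it rides the previously added path $P_{x,y}$ (all of whose edges are already in $H_0$) from near the first to near the \emph{rightmost} occurrence of that label on $P_{u,v}$, paying only $8W_{u,v}$ plus one $\log n\cdot W_{u,v}$ hop per label; the crucial weight bounds $w(x',a)\le W_{x,y}\le W_{u,v}$ use the sorting of pairs by $W$. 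Since $P_{u,v}$ is chosen as the \emph{first} violating path, there are at most $n^{\frac{1-\varepsilon}{2}}$ earlier paths and hence at most that many labels, which keeps the total within the budget $c\,n^{\frac{1-\varepsilon}{2}}\log n\cdot W_{u,v}$. As a minor point, your step ``$P_{u,v}$ has $\Omega(n^{\frac{1-\varepsilon}{2}})$ edges, hence a comparable number of missing edges'' is a non sequitur (many edges need not be missing); a correct lower bound on $\ell$ would instead bridge each missing edge with the multiplicative spanner, but such a bound is not what the argument actually needs.
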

\begin{proof}

We will define a labeling for the vertices. At the beginning, all the vertices will be unlabeled.
Go over the added paths by the order of the algorithm.
For every path $P_{x,y}$ which was added to the spanner, and every missing edge $(a,b)$ in it, we label by $\{x,y\}$ all the unlabeled $(n^\varepsilon)$-light neighbors of $a$ and of $b$.
We will show that for every added path, we label at least $n^{\frac{1+\varepsilon}{2}}$ vertices.
This will imply that
\[
	|\PP| \le \frac{n}{n^{\frac{1+\varepsilon}{2}}} =  n^{\frac{1-\varepsilon}{2}},
\]
proving the lemma.

Seeking contradiction, assume that there is a path for which we labeled less than $n^{\frac{1 + \varepsilon}{2}}$ vertices, and let $P_{u,v}$ be the first such path considered by the algorithm. Note that there can be at most $n^{\frac{1-\varepsilon}{2}}$ paths that were added before $P_{u,v}$.

Let $H_0$ be the spanner just before $P_{u, v}$ was added. The goal is to show a low stretch path in $H_0$ between $u,v$, contradicting the fact that $P_{u,v}$ was added. To this end, we distinguish between two types of edges in $P_{u,v}$ that are missing in $H_0$. 

The first type are missing edges $(a,b)$ that all the $(n^\varepsilon)$-light neighbors of $a$ or all the $(n^\varepsilon)$-light neighbors of $b$ are unlabeled.
Observe that there is a constant $k$, so there can be at most $k \cdot n^{\frac{1 - \varepsilon}{2}}$ such missing edges, since by \lemmaref{lem:a_lot_of_neighbors} $k \cdot n^{\frac{1-\varepsilon}{2}}$ missing edges have at least $\Omega(k \cdot n^{\frac{1-\varepsilon}{2}} \cdot n^\epsilon)  = \Omega(k \cdot n^{\frac{1 - \varepsilon}{2}})$ neighbors which are given labels. Choosing a large enough $k$, will contradict the assumption we label less than $n^{\frac{1 + \varepsilon}{2}}$ vertices when adding $P_{u,v}$.
So for each such an edge $(a,b)$ we can use the $\log n$-spanner which gives stretch at most $\log n\cdot w(a,b)\le\log n\cdot W_{u,v}$. Thus the total stretch over all these edges is at most $k \log n\cdot n^{\frac{1 - \varepsilon}{2}}\cdot W_{u,v}$.

The second type are missing edges with a labeled $(n^\varepsilon)$-light neighbor. Suppose $u'$ is a vertex in $P_{u,v}$ on a missing edge $(u',u'')$ with an $(n^\varepsilon)$-light neighbor labeled $\{x,y\}$. Let  $v'$ be the rightmost vertex on a missing edge $(v'',v')$ in $P_{u,v}$ with an $(n^\varepsilon)$-light neighbor labeled by $\{x,y\}$. Denote by $a$ (resp. $b$) the light neighbor of $u'$ (resp. $v'$) with label $\{x,y\}$.  Let $x'$ (resp., $y'$) be a vertex in $P_{x,y}$ such that $a$ (resp., $b$) is a $(n^\varepsilon)$-light neighbor of $x'$ (resp., $y'$) (see \figureref{fig:lem:small_num_paths}). Note that $w(u',a)\le w(u',u'')\le W_{u,v}$, since the edge $(u',u'')$ was not added in the $(n^\varepsilon)$-initialization, and similarly $w(v',b)\le W_{u,v}$. Also $w(x',a)\le W_{x,y}\le W_{u,v}$, since $a$ got its label by being a light neighbor of a missing edge in $P_{xy}$, and $W_{x,y}\le W_{u,v}$ by the initial sort of pairs according to the heaviest edge. Similarly $w(y',b)\le W_{u,v}$. Recalling that all the edges to an $(n^\varepsilon)$-light neighbor are in $H_0$, we can now see that the distance between $u'$ and $v'$ in $H_0$ has constant additive stretch:
\begin{align*}
	d_{H_0}(u', v') &\le d_{H_0}(u', a) + d_{H_0}(a, x') + d_{H_0}(x', y') + d_{H_0}(y', b) + d_{H_0}(b, v') \\
	&\le d_G(u', a) + d_G(a, x') + d_G(x', y') + d_G(y', b) + d_G(b, v') \\
	&\le 2(d_G(u', a) + d_G(a, x')) + d_G(u', v') + 2(d_G(y', b) + d_G(b, v')) \\
	&\le d_G(u', v') + 8W_{u, v}.
\end{align*}

We conclude that whenever we encounter a vertex $u'$ on a missing edge with a light neighbor labeled $\{x,y\}$, we can simply use the path in $H_0$ to the last vertex $v'$ on $P_{u,v}$ on a missing edge with a light neighbor labeled $\{x,y\}$, and pay only $8W_{u,v}$ additive stretch. Let $z$ be the neighbor of $v'$ closer to $v$, then use the multiplicative spanner in case the edge $(v',z)$ is missing. The remaining path from $z$ to $v$ will clearly have no more missing edges with a light neighbor labeled $\{x,y\}$. Recall that we added at most $ n^{\frac{1 - \varepsilon}{2}}$ paths before $P_{u,v}$, so there can be at most  $n^{\frac{1 - \varepsilon}{2}}$ different labels. This suggests that the total additive stretch accumulated by the second type of missing edges is at most $(8+
\log n)\cdot n^{\frac{1 - \varepsilon}{2}}\cdot W_{u,v}$.

Thus there exists a path in $H_0$ between $u,v$ of length at most $d_G(u,v)+(8+(1 + k)\log n)\cdot n^{\frac{1 - \varepsilon}{2}}\cdot W_{u,v}$, setting $c\ge 9 + k$ contradicts the fact that $P_{u,v}$ was added by the algorithm. This concludes the proof of the lemma.

\begin{figure}[h]
	\centering
	\includegraphics[scale=0.7]{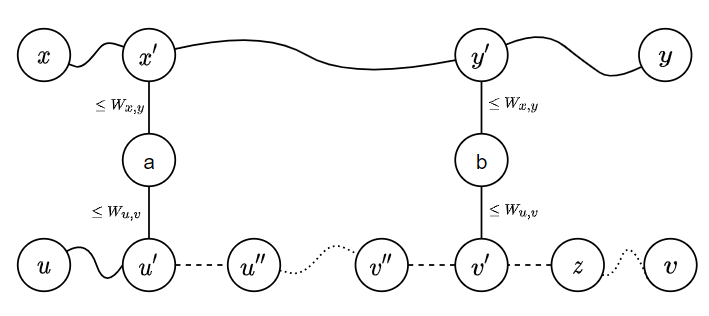}
	\caption{An illustration for \lemmaref{lem:small_num_paths}. Straight lines and curved lines are edges and paths which are present in $H_0$. Dotted straight lines are edges missing in $H_0$ and dotted curved lines are path with possibly missing edges in $H_0$}
	\label{fig:lem:small_num_paths}
\end{figure}

\end{proof}

\begin{lemma} \label{lem:paths_low_cost}
	Adding $\PP$ to $H$ adds $O(n)$ edges to the spanner.	
\end{lemma}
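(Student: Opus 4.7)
The total number of new edges contributed by the paths in $\PP$ is $\sum_{P_i \in \PP} m_i$, where $m_i$ is the number of edges of $P_i$ missing from the spanner at the time $P_i$ is added. The plan is to show this sum is $O(n)$.

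I would build on the labeling scheme from the proof of \lemmaref{lem:small_num_paths}, combined with a direct application of \lemmaref{lem:a_lot_of_neighbors} with parameter $t = n^\varepsilon$. For each added path $P_i$, \lemmaref{lem:a_lot_of_neighbors} supplies $\Omega(n^\varepsilon m_i)$ distinct vertices that are $n^\varepsilon$-light neighbors of some endpoint of a missing edge of $P_i$, via edges of weight at most $W_{u_i,v_i}$. Since every vertex is labeled at most once by the scheme, the total number of labels across all paths is at most $n$. If I can show that each added path $P_i$ contributes $\Omega(n^\varepsilon m_i)$ \emph{new} labels, then summing yields $n^\varepsilon \sum_i m_i = O(n)$, giving $\sum_i m_i = O(n^{1-\varepsilon}) \le O(n)$ as desired.

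To establish the per-path lower bound of $\Omega(n^\varepsilon m_i)$ new labels, I would refine the type~1/type~2 dichotomy from the proof of \lemmaref{lem:small_num_paths}. Type~1 missing edges (those with at least one endpoint whose $n^\varepsilon$-light neighbors are still all unlabeled) each contribute $\Omega(n^\varepsilon)$ fresh labels directly. For type~2 missing edges (all light neighbors of both endpoints are labeled by earlier paths), I would use a quantitative bypass argument: if the number of type~2 missing edges is too large compared to the bound claimed, then grouping them by the label $\{x,y\}$ of a covering previous path (at most $|\PP| = O(n^{(1-\varepsilon)/2})$ such labels by \lemmaref{lem:small_num_paths}), together with the greedy $(\log n,0)$-spanner used on the remaining gaps, constructs an $H_0$-path from $u_i$ to $v_i$ whose additive stretch is below $c \cdot n^{(1-\varepsilon)/2}\log n \cdot W_{u_i,v_i}$, contradicting the addition rule \eqref{eq:poly_span}.

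The main obstacle is making the bypass step quantitative in $m_i$. In the proof of \lemmaref{lem:small_num_paths}, only the existence of $\ge n^{(1+\varepsilon)/2}$ fresh labels had to be forced, and the bypass accounting sufficed with just the type~1 label contribution being visible. Here I need the lower bound on fresh labels to scale with $m_i$, so I must carefully charge type~2 missing edges to their covering previous-path labels, use the contiguity of shortest-path intersections to bound the number of distinct bypasses per label, and verify that the resulting additive error $O(|\PP|\cdot W_{u_i,v_i}) + O(\log n\cdot W_{u_i,v_i})\cdot (\text{type 1 count})$ still falls below the stretch threshold --- this last verification is the most delicate step of the argument.
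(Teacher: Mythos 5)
There is a genuine gap, and it sits exactly where you flag it: the claim that each added path $P_i$ contributes $\Omega(n^\varepsilon m_i)$ \emph{new} labels is neither established by your outline nor true in general. \lemmaref{lem:a_lot_of_neighbors} gives $\Omega(n^\varepsilon m_i)$ light neighbors of the missing edges, but says nothing about them being unlabeled. The bypass argument from \lemmaref{lem:small_num_paths} cannot be made quantitative in $m_i$: its cost accounting depends only on the number of distinct labels encountered (at most $|\PP|$) plus the number of type~1 edges, because all type~2 missing edges carrying a given label are skipped wholesale in a single $+8W_{u,v}$ jump. Hence an added path can carry arbitrarily many type~2 missing edges --- e.g.\ a path running ``parallel'' to previously added paths, whose missing edges' light neighbors were all labeled earlier --- contributing essentially no fresh labels while still exceeding the stretch threshold thanks to its type~1 edges alone. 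Worse, the target bound $\sum_i m_i = O(n^{1-\varepsilon})$ you would derive is stronger than the lemma and clashes with your own label budget: nothing prevents the algorithm from adding on the order of $n^{(1-\varepsilon)/2}$ essentially vertex-disjoint paths whose new edges total $\Theta(n)$; charging $n^\varepsilon$ fresh labels per new edge would then require more than $n$ labels, which is impossible. So the per-path proportionality cannot hold for all added paths, and the ``delicate step'' you defer is not merely delicate --- it is where the argument breaks.

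The paper's proof avoids labels entirely and is purely combinatorial. Classify each edge $(a,b)$ of an added path $P_{u,v}$: (i) if some endpoint lies on no previously added path, charge the edge to that vertex --- each vertex is incident to at most $2$ edges of the first added path containing it, so at most $2n$ such edges overall; (ii) if both endpoints lie on one common previously added path, then by consistent tie-breaking that path already contains the edge $(a,b)$, so nothing new is added; (iii) otherwise $a$ lies on some previous path $P_{x,y}$ and $b$ does not, so $(a,b)$ marks an endpoint of the (unique, contiguous) intersection of $P_{u,v}$ with $P_{x,y}$, giving $O(1)$ edges per pair of paths, i.e.\ $O(|\PP|^2)=O(n^{1-\varepsilon})$ in total by \lemmaref{lem:small_num_paths}. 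Summing gives $O(n)$. If you want to salvage your route, you would need a charging scheme in which type~2 edges are paid for by something other than fresh labels --- and the natural such currency is precisely the pairwise-intersection structure the paper uses.
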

\begin{proof}
	Let $P_{u, v}$ be a path added by the algorithm. Let $H_0$ be the spanner just before it is added. Then for every edge $(a, b) \in P_{u, v}$ there are three cases:
	\begin{enumerate}
		\item At least one of the vertices $a, b$ does not belong to any path previously added to $H$. Since every vertex has 2 edges touching it in the path, there can be at most $2n$ such edges.
		\item Both $a,b$ belong to the same previously added path. Note that the edge $(a, b)$ is already in $H_0$ in this case.
		\item There is a previously added path $P_{x,y}$ such that $a\in P_{x,y}$ and $b\notin P_{x,y}$. Then the two paths $P_{x,y}$ and $P_{u,v}$ start their intersection at $a$.
	\end{enumerate}
To bound the number of edges in case 3, note that every two paths can have only one intersecting subpath. So any pair of paths in $\PP$ can introduce at most 2 edges to case 3 (the first and the last edge in their common subpath). By \lemmaref{lem:small_num_paths} there can be at most $2 {|\PP| \choose 2} = O(n^{1 - \varepsilon})$ such added edges in all the paths.
\end{proof}
By \lemmaref{lem:paths_low_cost} the number of edges in $H$ is $O(n^{1 + \varepsilon})$. We have proven the following theorem.
\begin{theorem}
	For every undirected weighted graph $G = (V, E, w)$ and $0\le\varepsilon \le 1$, there exists a deterministic polynomial time algorithm that produces a $+O(n^{\frac{1-\varepsilon}{2}} \log n)W$ spanner of size $O(n^{1 + \varepsilon})$.
\end{theorem}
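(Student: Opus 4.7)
The plan is to assemble the theorem from the construction and the two lemmas already proved in this section. By the stopping condition of the algorithm, upon termination no pair $(u,v)$ satisfies \eqref{eq:poly_span}, so every pair has $d_H(u,v) \le d_G(u,v) + c \cdot n^{(1-\varepsilon)/2} \log n \cdot W_{u,v}$, which is exactly the claimed $+O(n^{(1-\varepsilon)/2} \log n) W$ stretch. The additive guarantee is therefore immediate, so all that remains is a size bound and a running time argument.

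For the size bound, I would partition the edges of $H$ according to the three stages of the construction. The $(n^\varepsilon)$-light initialization contributes at most $n \cdot n^\varepsilon = n^{1+\varepsilon}$ edges. The $(\log n, 0)$-greedy multiplicative spanner from \lemmaref{lem:multspan}, instantiated with $k = \log n$, contributes $O(n^{1 + 1/\log n}) = O(n)$ edges. Finally, the set of paths $\PP$ added in the last stage contributes $O(n)$ extra edges by \lemmaref{lem:paths_low_cost}, whose proof in turn invokes the bound $|\PP| \le n^{(1-\varepsilon)/2}$ established in \lemmaref{lem:small_num_paths} to handle the ``common intersection'' case. Summing these three contributions gives a total of $O(n^{1+\varepsilon})$ edges, as claimed.

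For the running time, I would note that sorting all $\binom{n}{2}$ pairs by $W_{u,v}$ takes $\tO(n^2)$ time, the light initialization and greedy multiplicative spanner can both be produced in polynomial time, and for each candidate pair one can check \eqref{eq:poly_span} and, when necessary, compute and add $P_{u,v}$ in polynomial time (e.g. by a single-source shortest-paths computation in $G$ and in the current $H$). The number of iterations is at most $\binom{n}{2}$, so the total running time is polynomial in $n$.

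The main work of the proof has effectively been shouldered by \lemmaref{lem:small_num_paths}: the labeling argument that forces each newly added path either to label $\Omega(n^{(1+\varepsilon)/2})$ fresh vertices or else to admit a low-stretch detour via previously added paths and the multiplicative spanner. Once that lemma is in hand, the remaining steps are bookkeeping; there is no additional obstacle in deducing the theorem itself. I would therefore keep the proof short, explicitly citing \lemmaref{lem:small_num_paths} and \lemmaref{lem:paths_low_cost} for the edge count and verifying the stretch directly from the termination condition~\eqref{eq:poly_span}.
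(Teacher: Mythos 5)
Your proposal matches the paper's own argument: the stretch is immediate from the termination condition \eqref{eq:poly_span}, and the size is obtained by summing the $n^{1+\varepsilon}$ edges of the light initialization, the $O(n)$ edges of the $(\log n,0)$-greedy spanner, and the $O(n)$ edges contributed by the added paths via \lemmaref{lem:paths_low_cost} (which relies on \lemmaref{lem:small_num_paths}). This is exactly how the paper concludes the theorem, so the proposal is correct and essentially identical in approach.
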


\section{A +2W spanner in $\tO(n^2)$ time}\label{sec:fast}

Let $G=(V,E,w)$ be a weighted graph with $n$ vertices, and fix $k = 1/2 \cdot \log n$ (assume $k$ is an integer).
Set $s_0 = n, s_1 = n/2,\ldots,s_k = n/2^k=\sqrt{n}$.
For each $i = 0,1,\ldots,k$, let $V_i$ be the set of vertices  of degree at least $s_i$ (note that $V_0 = \emptyset$), set $V_{k+1}=V$.
Let $D_i$ be a set of vertices sampled independently at random from $V$, each with probability $p={{c\log n} \over {s_i}}$ for a constant $c>1$. By standard considerations it follows that
w.h.p. $|D_i| = \Theta({n\log n \over {s_i}})$, and $D_i$ is a dominating set for $V_i$ by \lemmaref{lem:emu:dominating}.

For every $i \in [k]$, and for every $v \in V_i$, let $p_i(v) \in D_i$ be the closest vertex in $D_i$ to $v$ (breaking ties arbitrarily).
Define $E^*_i = \{(v,p_i(v)) ~:~ v \in V_i\}$.
Also, for every $v \in V_i$, 
define $Bunch_i(v)  = \{(u,v)\in E ~:~ w((u,v))<w((v,p_i(v)))\}$. 
For $v \not \in V_i$, (i.e., $deg(v) < s_i$), set $Bunch_i(v) = \{(v,u) \in E\}$ to be the set of all edges incident on $v$.

Now set $E_1 = E$, and for each $i \in [2,k+1]$, set $E_i = \bigcup_{v \in V} Bunch_{i-1}(v)$.
Note that for $v\in V_i$ the random variable $|Bunch_i(v)|$ is dominated by a geometric random variable with parameter $p$, so $\E[|Bunch_i(v)|]\le {{s_i} \over {c\log n} }$, thus for any $v\in V$, w.h.p. $|Bunch_i(v)|\le O(s_i)$. We conclude that w.h.p.
$|E_i| = O(n \cdot s_{i-1})$.

\paragraph{The algorithm.}

The algorithm is to add to the spanner $H$ shortest path trees (SPT) from every vertex of $D_i$ in the graph $(V,E_i \cup E^*_i)$, and take all edges of $E_{k+1}$. See \algref{alg:+2W}.
\begin{algorithm}[ht]
	\caption{$\texttt{+2W spanner}(G,S,\epsilon)$}\label{alg:+2W}
	\begin{algorithmic}[1]
		\STATE Initialize $H \leftarrow \emptyset$;
		\FOR { $i = 1,2,\ldots,k$}
		\STATE Build SPT trees rooted at every vertex $v \in D_i$ in $(V,E_i \cup E^*_i)$, and add them to $H$;
		\ENDFOR
		\RETURN $H \cup E_{k+1}$;
	\end{algorithmic}
\end{algorithm}

We will also refer to each iteration $i$ of this for-loop as {\em step $i$ of the algorithm}. 

\subsection{Analysis of Size and Running Time}

For every index $i \in [k]$, we have w.h.p. $|D_i| = \tO(n/s_i)$, thus $\sum_{i=1}^k |D_i| \cdot n = \tO(n^2)\cdot \sum_{i=1}^k\frac{1}{s_i}=\tO(n^{3/2})$.
Also, w.h.p. $|E_{k+1}| \le n \cdot s_k = \tO(n^{3/2})$.
Hence the overall size of the spanner is $\tO(n^{3/2})$ as well.

To bound the running time, note that each step $i \in [k]$ of the algorithm requires computing $|D_i|$ SPTs in a graph with $O(|E_i|+n)$ edges. Using Dijkstra, each tree can be constructed in near linear time, so the total running time for step $i$ is
$$ \tO( |E_i| + n) \cdot|D_i| =\tO(n \cdot s_{i-1} \cdot n/s_i) =\tO(n^2)$$ time.
The last step requires $\tO(|E|)$ time, and thus the overall time is $\tO(n^2)$.

\subsection{Stretch Analysis}

Let $u,v$ be a vertex pair, let $P=P_{u,v}$ be the shortest $u-v$ path, and $W_{u,v}$ is the weight of the heaviest edge in $P$.
For the sake of the following lemma, step 0 of the algorithm is before the algorithm starts.
\begin{lemma}
	For every index $i = 0,1,\ldots,k$, at least one of the following holds:
	\begin{enumerate}
		\item $d_H(u,v) \le d_G(u,v) + 2 W_{u,v}$, or
		\item $E(P) \subseteq E_{i+1}$.
	\end{enumerate}
\end{lemma}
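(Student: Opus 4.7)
I would prove the lemma by induction on $i$. The base case $i=0$ is immediate, since $E_1 = E$ and $P$ is a path in $G$, so $E(P) \subseteq E_1$ trivially. For the inductive step, assume the claim holds for index $i-1 \ge 0$. If alternative (1) already held there, we are done; so we may assume $E(P) \subseteq E_i$. If moreover $E(P) \subseteq E_{i+1}$, alternative (2) holds and we are again done. Otherwise, there exists an edge $(a,b) \in E(P)$ with $(a,b) \in E_i \setminus E_{i+1}$, and the task reduces to showing that alternative (1) must follow.

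The next step is to extract what it means for $(a,b)$ to miss $E_{i+1} = \bigcup_{v \in V} Bunch_i(v)$. Recall that if $c \notin V_i$ then $Bunch_i(c)$ contains \emph{every} edge incident on $c$; hence if $a \notin V_i$ then $(a,b) \in Bunch_i(a) \subseteq E_{i+1}$, a contradiction. So $a \in V_i$, and from the definition of $Bunch_i(a)$ the fact that $(a,b) \notin Bunch_i(a)$ forces $w((a,b)) \ge w((a, p_i(a)))$.

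Finally, I would exhibit a short $u$--$v$ walk in $H$. At step $i$ the algorithm added the full SPT rooted at $p_i(a) \in D_i$, computed in the graph $(V, E_i \cup E_i^*)$. Since $E(P) \subseteq E_i$ by the inductive hypothesis, the sub-paths of $P$ from $u$ to $a$ and from $b$ to $v$ lie entirely in $E_i$, while $(a, p_i(a)) \in E_i^*$. The SPT therefore guarantees
\[
d_H(p_i(a), u) \le w((a, p_i(a))) + d_G(a, u) \quad \text{and} \quad d_H(p_i(a), v) \le w((a, p_i(a))) + w((a,b)) + d_G(b, v).
\]
Summing via the triangle inequality through $p_i(a)$, using $w((a,p_i(a))) \le w((a,b)) \le W_{u,v}$ and the identity $d_G(a,u) + w((a,b)) + d_G(b,v) = d_G(u,v)$ (as $(a,b)$ is consecutive on the shortest path $P$), yields $d_H(u,v) \le d_G(u,v) + 2 W_{u,v}$, which is alternative (1). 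The only genuinely delicate point in this plan is the structural step isolating why missing $E_{i+1}$ forces both $a \in V_i$ and $w((a,b)) \ge w((a, p_i(a)))$; once that is in hand, a single detour through the pivot closes the argument.
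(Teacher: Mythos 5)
Your proof is correct and follows essentially the same route as the paper's: induction on $i$, observing that a path edge missing from the next edge set must have an endpoint in $V_i$ whose pivot edge is at least as light, and then detouring through the SPT rooted at that pivot (added at step $i$ in $(V, E_i\cup E_i^*)$) to get the $+2W_{u,v}$ bound. The only cosmetic differences are an index shift (you induct from $i-1$ to $i$ rather than $i$ to $i+1$) and that you use just one endpoint of the missing edge, which is all the paper's argument actually uses as well.
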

\begin{proof}
	The proof is by induction $i$.
	
	\noindent{\bf Base ($i= 0$):}
	Clearly $E(P) \subseteq E_1 = E$, i.e., the second assertion holds.
	
	\noindent{\bf Step:}
	Suppose that the induction hypothesis holds for some $i \in [0,k-1]$. 
	If the first assertion holds for $i$, then obviously the first assertion holds for $i+1$ as well. Hence, in this case we are done.
	
	So suppose that the second assertion holds for $i$, i.e., $E(P) \subseteq  E_{i+1}$.
	Consider the case that there exists an edge $e = (x,y) \in E(P) \setminus E_{i+2}$.
	(As otherwise $E(P) \subseteq E_{i+2}$, and the second assertion holds for $i+1$.)
	Then we claim that both $x,y \in V_{i+1}$. To see this, assume that, e.g., $x \not \in V_{i+1}$, but then by definition of Bunch for vertices not in $V_{i+1}$ we have that $(x,y) \in Bunch_{i+1}(x) \subseteq E_{i+2}$, contradiction.
	
	So we have $x,y \in V_{i+1}$, and $e = (x,y) \not \in Bunch_{i+1}(y)$.
	Thus $y' = p_{i+1}(y)$ is defined, and 
	$$W_{u,v} \ge w((x,y)) \ge w((y,y')) = w((y,p_{i+1}(y))~.$$
	Recall that $(y,p_{i+1}(y)) \in E^*_{i+1}$.
	So both paths $(y',y) \circ P(y,u)$ and $(y',y) \circ P(y,v)$ are contained in $E_{i+1} \cup E^*_{i+1}$.
	(We use $\circ$ here for concatenation,  $P(y,u)$ for the subpath of $P$ connecting $y$ with $u$, and
	$P(y,v)$ for the subpath of $P$ connecting $y$ with $v$.)
	
	Also, $y' \in D_{i+1}$. Hence inserting an SPT tree rooted at $y'$ in $E_{i+1} \cup E^*_{i+1}$ into the spanner $H$ guarantees 
	$$d_H(u,v) \le d_G(u,v) + 2w(y',y) \le d_G(u,v) + 2\cdot W_{u,v}~.$$
	This tree is indeed inserted into the spanner on step $i+1$, and so the first assertion for $i+1$ holds.
\end{proof}

Apply the lemma for $i=k$. If the first assertion holds, then we are done.
Otherwise $E(P) \subseteq E_{k+1}$. 
But then step $k+1$ of the algorithm ensures that $d_H(u,v) = d_G(u,v)$, as all edges of $E_{k+1}$ are inserted into $H$ on this step. This completes the proof of the following theorem.

\begin{theorem}
	Let $G=(V,E,w)$ be a weighted graph with $n$ vertices, then there is an $\tO(n^2)$ time randomized algorithms that produces w.h.p. a $+2W$ spanner of size $\tO(n^{3/2})$.
\end{theorem}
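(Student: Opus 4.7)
The plan is to combine the three analyses already developed in this section—size, running time, and stretch—by invoking the invariant lemma proven just above at its final index $i=k$. Since Algorithm~\ref{alg:+2W} is already specified, the remaining work is largely bookkeeping; the only delicate point is ensuring the high-probability bounds on $|D_i|$, on the fact that $D_i$ dominates $V_i$, and on $|Bunch_i(v)|$ hold jointly for all relevant $i$ and $v$.

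For the size, I would use the w.h.p.\ bounds $|D_i| = \tO(n/s_i)$ and $|E_i| = \tO(n \cdot s_{i-1})$. Because $s_i = n/2^i$, the sum $\sum_{i=1}^k |D_i|$ is geometric and dominated by its last term $\tO(n/s_k) = \tO(\sqrt{n})$. Each SPT contributes $O(n)$ edges, so the trees built across all steps yield $\tO(n^{3/2})$ edges in total; adding $|E_{k+1}| \le n \cdot s_k = \tO(n^{3/2})$ gives the final size bound.

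For the running time, in step $i$ each of the $|D_i|$ SPT computations runs Dijkstra on a graph with $\tO(|E_i|+n) = \tO(n s_{i-1})$ edges, giving $\tO(n s_{i-1}) \cdot \tO(n/s_i) = \tO(n^2)$ time per step (since $s_{i-1} = 2 s_i$). Summing over the $k = O(\log n)$ steps absorbs only a polylog factor, and the final $E_{k+1}$ insertion takes linear time, so the overall cost remains $\tO(n^2)$.

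For the stretch, I would apply the invariant lemma at $i=k$. If its first assertion holds, then directly $d_H(u,v) \le d_G(u,v) + 2W_{u,v}$. Otherwise $E(P_{u,v}) \subseteq E_{k+1}$, and since the algorithm unions $E_{k+1}$ into $H$ at termination, $H$ contains a genuine shortest $u$-$v$ path, giving $d_H(u,v) = d_G(u,v)$. In either case the $+2W$ guarantee holds for every pair w.h.p., completing the proof. The only real obstacle is the high-probability scaffolding: standard Chernoff bounds together with \lemmaref{lem:emu:dominating} handle the size of $D_i$ and its domination of $V_i$, while $|Bunch_i(v)| = O(s_i)$ follows from the geometric-dominance argument already sketched before the algorithm, and a union bound over the $O(\log n)$ levels preserves the w.h.p.\ guarantee.
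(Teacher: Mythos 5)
Your proposal is correct and follows essentially the same route as the paper: the same geometric summation for the size (SPTs from $D_i$ plus $E_{k+1}$), the same per-step Dijkstra accounting of $\tO(|E_i|+n)\cdot|D_i|=\tO(n^2)$ over $O(\log n)$ levels, and the same application of the invariant lemma at $i=k$ (with the $E_{k+1}$ fallback giving exact distances). The only addition is your explicit mention of the union bound over levels for the high-probability events, which the paper leaves implicit.
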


\section{A $+(2 + \varepsilon)W$ subsetwise spanner}\label{sec:subset}

Let $G=(V,E,w)$ be a weighted undirected graph, a parameter $0<\varepsilon<1$, and $S\subseteq V$ a set of vertices. In this section we devise an $+(2 + \varepsilon)W$ subsetwise spanner of size $O(n\cdot\sqrt{|S|}/\varepsilon)$. That is, the spanner guarantees an additive stretch at most $(2+\varepsilon)\cdot W_{u,v}$ for any $u,v\in S$.

\paragraph*{Construction.}
Our algorithm follows a similar greedy idea to our previous constructions.
We start by letting $H$ be a $( \sqrt{|S|} )$-light initialization of $G$. Next, sort all the pairs $\{u, v\} \in {S\choose 2}$ by $W_{u, v}$ in increasing order, breaking ties arbitrarily. When considering $u, v$, we add $P_{u, v}$ to $H$ if 
\begin{align} \label{eq:subsetwise}
	d_H(u, v) > d_G(u, v) + (2 + \varepsilon) W_{u, v}.
\end{align}

\paragraph*{Analysis. }
Our main lemma is a variant of \lemmaref{lem:improved_dist} tailored to the subsetwise case.  For every path added to $H$, we improve the distance from many neighbors of the path to vertices in $S$, and have a good guarantee for all of them. Note that even though we claim improvements for many pairs in $S\times V$, the final spanner {\em does not} have guarantee for all such pairs, only to those in $S\times S$.

\begin{lemma}\label{lem:improved_subsetwise}
	Let $P_{u, v}$ be a path that was added to $H$. Denote by $H_0$ the spanner just before $P_{u, v}$ was added and by $H_1$ the spanner right after the path was added. Let $a$ be a $(\sqrt{|S|})$-light neighbor of $x\in P_{u,v}$ with $w(a, x) \le W_{u, v}$. Then both of the following holds.
	\begin{enumerate}
		
		\item $
		d_{H_1}(u, a) \le d_G(u, a) + 2 W_{u, v}
		\textbf{ and }
		d_{H_1}(v, a) \le d_G(u,a) + 2 W_{u, v}$.
		\item
		$
		d_{H_1}(u, a) \le d_{H_0}(u, a) - \frac{\varepsilon}{2} W_{u, v}
		\textbf{ or }
		d_{H_1}(v,a) \le d_{H_0}(v, a) - \frac{\varepsilon}{2} W_{u, v}
		$.

	\end{enumerate}
\end{lemma}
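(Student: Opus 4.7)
The plan is to mirror closely the two-part argument of \lemmaref{lem:improved_dist}, but adapted so that progress is measured between the single neighbor $a$ and the two endpoints $u,v\in S$, rather than among three neighbors distributed along the path. The structural advantage of the subsetwise setting is that a single light neighbor $a$ of some $x\in P_{u,v}$ suffices on each side, which is exactly what will yield the sharper $(2+\varepsilon)$ bound instead of $(6+\varepsilon)$.

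For the first item, I would apply the triangle inequality together with two facts: the edge $\{a,x\}$ already belongs to the $\sqrt{|S|}$-light initialization (and hence lies in $H_1$), and the entire path $P_{u,v}$ is contained in $H_1$. This immediately gives $d_{H_1}(u,a)\le d_{H_1}(u,x)+w(x,a)=d_G(u,x)+w(x,a)$. Bounding $d_G(u,x)\le d_G(u,a)+d_G(a,x)\le d_G(u,a)+w(a,x)$ and using $w(a,x)\le W_{u,v}$ twice yields $d_{H_1}(u,a)\le d_G(u,a)+2W_{u,v}$. The bound for $d_{H_1}(v,a)$ is obtained symmetrically.

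For the second item, I would assume for contradiction that neither improvement holds, so that $d_{H_0}(u,a)<d_{H_1}(u,a)+\frac{\varepsilon}{2}W_{u,v}$ and $d_{H_0}(v,a)<d_{H_1}(v,a)+\frac{\varepsilon}{2}W_{u,v}$. Substituting the intermediate bound $d_{H_1}(u,a)\le d_G(u,x)+W_{u,v}$ (and its symmetric analogue $d_{H_1}(v,a)\le d_G(x,v)+W_{u,v}$) from the first part, and concatenating via $d_{H_0}(u,v)\le d_{H_0}(u,a)+d_{H_0}(a,v)$, I would reach $d_{H_0}(u,v)<d_G(u,x)+d_G(x,v)+(2+\varepsilon)W_{u,v}=d_G(u,v)+(2+\varepsilon)W_{u,v}$, where the last equality uses that $x$ lies on the shortest path $P_{u,v}$. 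This contradicts the trigger condition~\eqref{eq:subsetwise} under which $P_{u,v}$ was added.

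I do not anticipate a real obstacle. The only subtle point is that the first item is phrased in terms of $W_{u,v}$ rather than $W_{u,a}$; this is essential because the contradiction in the second item aggregates both chains into a single $(2+\varepsilon)W_{u,v}$ term matching the threshold of~\eqref{eq:subsetwise}. The hypothesis $w(a,x)\le W_{u,v}$ is precisely what makes the aggregation go through, and the ordering of pairs by $W_{u,v}$ (though not strictly used in this lemma itself) will be needed later, in the theorem, to turn per-pair improvements into the overall edge bound, exactly as in the $+(6+\varepsilon)W$ case.
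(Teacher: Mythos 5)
Your proposal is correct and follows essentially the same argument as the paper: the first item via the triangle inequality through the light edge $\{a,x\}$ and the newly added path, and the second item by contradiction, combining the two non-improvement inequalities through $a$ to violate the trigger condition~\eqref{eq:subsetwise}. No gaps; the only difference is cosmetic bookkeeping of $w(a,x)$ versus $d_G(a,x)$.
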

\begin{proof}
	We begin with the first item. By the triangle inequality,
	\begin{align*}
		d_{H_1}(u, a) &\le d_{H_1}(u, x) + d_{H_1}(x, a) \\
		&= d_{G}(u, x) + d_{G}(x, a) \\
		&\le d_{G}(u, a) + d_G(x, a) + d_{G}(x, a) \\
		&\le d_{G}(u, a) + 2W_{u, v}. 
	\end{align*}

	The bound on $d_{H_1}(v,a)$ follows in a symmetric manner, which concludes the proof of the first item.

	Seeking contradiction, assume that the second item does not hold. This suggests that
	\[
	d_{H_0}(u, a)< d_{H_1}(u, a)+ \frac{\varepsilon}{2} W_{u, v} \le d_G(u, x) + (1 + \frac{\varepsilon}{2})W_{u, v}~,
	\]

	and also
	\[
	d_{H_0}(v,a) <  d_{H_1}(v,a)+ \frac{\varepsilon}{2} W_{u, v}\le d_G(v, x) + (1 + \frac{\varepsilon}{2})W_{u, v}~.
	\]
	 So we have that
	\begin{align*}
		d_{H_0}(u, v)
		&\le d_{H_0}(u, a) + d_{H_0}(a, v) \\
		&< d_{G}(u, x) + (1 + \frac{\varepsilon}{2})W_{u, v}  + d_{G}(x, v) +(1 + \frac{\varepsilon}{2})W_{u, v} \\
		&= d_G(u, v) + (2 + \varepsilon) W_{u, v},
	\end{align*}
which is a contradiction to \eqref{eq:subsetwise}, since we assumed that the path $P_{u, v}$ was added to the spanner.
\end{proof}
	
\begin{theorem}
	For every undirected weighted graph $G= (V, E, w)$ with $n$ vertices, a vertex set $S \subseteq V$ and a parameter $\varepsilon>0$, there exists a deterministic polynomial time algorithm that produces a $+(2 + \varepsilon)W$ subsetwise $S \times S$ spanner of size $O(\frac{1}{\varepsilon}\cdot n \sqrt{|S|})$.
\end{theorem}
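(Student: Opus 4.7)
The plan is to follow the template of the $+(6+\varepsilon)W$ spanner analysis in \theoremref{thm:6eps_span}, now using the subsetwise improvement guarantee from \lemmaref{lem:improved_subsetwise}. Both the stretch bound and polynomial running time are immediate from the construction: whenever a pair $u,v\in S$ has $d_H(u,v)>d_G(u,v)+(2+\varepsilon)W_{u,v}$ we insert $P_{u,v}$ into $H$, so upon halting every pair in $S\times S$ satisfies the required bound, and each step (light initialization, sorting the $\binom{|S|}{2}$ candidate pairs, and computing shortest distances) clearly runs in polynomial time. The $\sqrt{|S|}$-light initialization contributes $O(n\sqrt{|S|})$ edges, so the only remaining task is to bound the total number of edges added through the greedy paths.

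For this counting, I would define two kinds of events on pairs $(s,a)\in S\times V$. A pair is \emph{set-off} the first moment at which $d_H(s,a)\le d_G(s,a)+2W_{u,v}$ holds for the $W_{u,v}$ of the path $P_{u,v}$ just inserted, and it is \emph{improved} whenever the addition of some $P_{u,v}$ shrinks $d_H(s,a)$ by at least $\tfrac{\varepsilon}{2}W_{u,v}$. Because pairs are processed in nondecreasing order of $W_{u,v}$, every subsequent improvement uses a weight at least as large as the set-off weight, so after set-off the slack $d_H(s,a)-d_G(s,a)$ is at most $2W$ while every improvement chips at least $\tfrac{\varepsilon}{2}W$ off it; hence each pair is improved at most $O(1/\varepsilon)$ times after its unique set-off. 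Summing over all $|S|\cdot n$ such pairs gives at most $O(n|S|/\varepsilon)$ total set-off and improvement events.

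It remains to charge each added path with many events. Suppose $P_{u,v}$ is inserted while $\ell$ of its edges are missing from the current spanner $H_0$. By \lemmaref{lem:a_lot_of_neighbors} with $t=\sqrt{|S|}$, there are $\Omega(\sqrt{|S|}\cdot\ell)$ distinct vertices $a$ (discarding at most two that could coincide with $u$ or $v$) each having a $\sqrt{|S|}$-light neighbor $x$ on $P_{u,v}$ of edge weight at most $W_{u,v}$. For every such $a$, \lemmaref{lem:improved_subsetwise} guarantees that after the insertion both $(u,a)$ and $(v,a)$ meet the $2W_{u,v}$ threshold, and at least one of them is improved by $\tfrac{\varepsilon}{2}W_{u,v}$, so each added path contributes $\Omega(\sqrt{|S|}\cdot\ell)$ events. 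Letting $t$ denote the total number of edges contributed by all greedy paths, combining the two counts gives $\Omega(t\cdot\sqrt{|S|})\le O(n|S|/\varepsilon)$, so $t=O(n\sqrt{|S|}/\varepsilon)$ and the total spanner size is $O(n\sqrt{|S|}/\varepsilon)$ as required. The main subtlety, exactly as in \theoremref{thm:6eps_span}, is that ``one unit of progress per missing edge'' does not survive the weighted setting and must be replaced by progress measured in units of $\tfrac{\varepsilon}{2}W$; sorting pairs by $W_{u,v}$ is precisely what ensures that each such increment is large relative to the slack at the pair's set-off.
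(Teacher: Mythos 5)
Your proposal is correct and follows essentially the same argument as the paper: the same set-off/improved accounting on pairs in $S\times V$, with $\Omega(\ell\sqrt{|S|})$ events per added path via \lemmaref{lem:a_lot_of_neighbors} and \lemmaref{lem:improved_subsetwise}, and at most $O(1/\varepsilon)$ improvements per pair after set-off thanks to the ordering by $W_{u,v}$, giving $\Omega(t\sqrt{|S|})\le O(n|S|/\varepsilon)$ and hence the claimed size bound.
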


\begin{proof}
Our algorithm clearly yields a $+(2+\varepsilon)\cdot W$ spanner for $S\times S$, and can be done in polynomial time. It remains to bound the size of the spanner. The $(\sqrt{|S|})$-initialization adds at most $n\cdot\sqrt{|S|}$ edges to $H$.

Let $u,v\in S$ be such that $P_{u,v}$ is added to the spanner. Let $H_0$ be the spanner just before the path is added, and $H_1$ after. A pair $(a,b)$ in $S\times V$ is said to {\em set-off} if this is the first time that $d_{H_1}(a,b)\le d_G(a,b)+2W_{u,v}$. This pair is {\em improved} if $d_{H_1}(a,b)\le d_{H_0}(a,b)-\frac\varepsilon2\cdot W_{u,v}$.

By \lemmaref{lem:a_lot_of_neighbors} if there are $\ell$ missing edges of $P_{u,v}$ in $H_0$, then there are at least $\Omega(\ell\cdot\sqrt{|S|})$ light neighbors that are connected to vertices on missing edges of $P_{u,v}$ with weight at most $W_{u,v}$. Thus there are $\Omega(\ell\cdot\sqrt{|S|})$ choices for $a$ in \lemmaref{lem:improved_subsetwise}. That is, so many pairs in $S\times V$ are set-off and improved.
We notice that pairs from $S \times V$ can be set-off once and improved at most $\frac{4}{\varepsilon}$ times thereafter. If $t$ is the total number of edges added to $H$ by all the paths in the second stage of the algorithm, we get that
\[
\Omega(t \cdot \sqrt{|S|}) \le O(\frac{|S| \cdot |V|}{\varepsilon})~,
\]
thus $t = O(\frac{1}{\varepsilon} \cdot n\sqrt{|S|})$.
\end{proof}




\section{A $+4W$ emulator}\label{sec:4emul}

\paragraph{Construction}
Our algorithm for a $+4W$ emulator works as follows. Start by letting $H=(V,E',d_G)$ be a $(2n^{1/3} \ln{n})$-light initialization of $G$.\footnote{By increasing the leading constant from 2 to $c$, we can reduce the failure probability to at most $O(n^{1-c})$.}
Let $S\subseteq V$ be a random set, created by sampling each vertex independently with probability $\frac{1}{n^{1/3}}$.
We finish by adding $S \times S$ to $E'$ (with weights corresponding to distances in $G$).


\begin{theorem}
	For every undirected weighted graph $G= (V, E, w)$, there exists a randomized algorithm that produces w.h.p. a $+4W$ emulator of size $O(n^{4/3} \log n)$.
\end{theorem}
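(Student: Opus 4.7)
The proof will have three parts: verifying that $H$ is a valid emulator, bounding its size, and establishing the $+4W$ stretch guarantee.

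First, I will argue that $H$ is an emulator, i.e. $d_H(u,v)\ge d_G(u,v)$ for all $u,v$. Every edge of $H$ is either an actual edge of $G$ (from the light initialization, with weight $w$) or a direct $S\times S$ edge with weight $d_G$. In both cases the edge weight is at least the $G$-distance between its endpoints, so any path in $H$ has length at least the corresponding $G$-distance.

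For the size, the $(2n^{1/3}\ln n)$-light initialization contributes at most $2n^{4/3}\ln n$ edges. Since each vertex is sampled into $S$ independently with probability $n^{-1/3}$, by a standard Chernoff bound $|S|=O(n^{2/3})$ w.h.p., so $|S\times S|=O(n^{4/3})$. Summing gives $O(n^{4/3}\log n)$ edges w.h.p.

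The main step is the stretch analysis. Invoke \lemmaref{lem:emu:dominating} with $\varepsilon=1/3$: with probability at least $1-1/n$, every vertex of degree at least $2n^{1/3}\ln n$ has a $(2n^{1/3}\ln n)$-light neighbor in $S$. Condition on this event. Fix any $u,v\in V$, and consider the shortest path $P_{u,v}=(v_0=u,v_1,\dots,v_k=v)$. If all edges of $P_{u,v}$ lie in $H$, then $d_H(u,v)=d_G(u,v)$ and we are done. Otherwise, let $i^*$ be the smallest index with $(v_{i^*},v_{i^*+1})\notin H$ and $j^*$ the largest such index; set $a=v_{i^*}$ and $b=v_{j^*+1}$. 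Because $(a,v_{i^*+1})$ is not among the $t$ lightest edges at $a$ (with $t=2n^{1/3}\ln n$), $a$ has at least $t$ neighbors via edges of weight at most $w(a,v_{i^*+1})\le W_{u,v}$. In particular, $\deg(a)\ge t$, so $a$ has a light neighbor $s_a\in S$ with $w(a,s_a)\le W_{u,v}$; by symmetry $b$ has a light neighbor $s_b\in S$ with $w(b,s_b)\le W_{u,v}$.

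Now I will construct a short $u$-$v$ path in $H$ by concatenating five pieces: the $H$-subpath $u\to v_1\to\cdots\to a$ (of length $d_G(u,a)$, since every edge before index $i^*$ lies in $H$), the light edge $(a,s_a)$, the direct edge $(s_a,s_b)$ of weight $d_G(s_a,s_b)$, the light edge $(s_b,b)$, and the $H$-subpath $b\to v_{j^*+2}\to\cdots\to v$ (of length $d_G(b,v)$). The middle edge satisfies
\[
d_G(s_a,s_b)\le w(s_a,a)+d_G(a,b)+w(b,s_b)\le 2W_{u,v}+d_G(a,b),
\]
and $d_G(a,b)$ equals the subpath length $d_G(u,v)-d_G(u,a)-d_G(b,v)$ since $a,b$ lie on $P_{u,v}$. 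Summing the five contributions collapses to $d_G(u,v)+4W_{u,v}$, giving the desired stretch. A union bound over the Chernoff event and the domination event of \lemmaref{lem:emu:dominating} shows the whole construction succeeds w.h.p.

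The only mild subtlety is ensuring that the ``light-neighbor-in-$S$'' guarantee applies simultaneously to all the relevant endpoints; this is where the high-probability statement of \lemmaref{lem:emu:dominating} (which already quantifies over all high-degree vertices) is essential. Everything else is a routine triangle-inequality calculation.
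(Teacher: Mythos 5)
Your proof is correct and follows essentially the same route as the paper: locate the first and last missing edges of $P_{u,v}$, use \lemmaref{lem:emu:dominating} to get light $S$-neighbors of their endpoints with edge weight at most $W_{u,v}$, and bridge via the $S\times S$ edge plus the intact prefix and suffix of the path, with the same size accounting. No gaps worth noting.
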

\begin{proof}
	We begin with the stretch analysis. Let $u, v \in V$. If all the edges of $P_{u, v}$ exists in $H$, then $d_H(u, v) = d_G(u, v)$ and we are done.
	
	Otherwise, let $u=x_1, x_2, \dots x_k=v$ be the vertices of $P_{u, v}$ sorted by their distance from $u$.
	Let $x_i, x_j$ be the first and last vertices for which $\{x_i, x_{i+1}\}, \{x_{j-1}, x_{j}\} \notin E'$.

We claim that each of $x_i, x_j$ have at least $2n^{1/3} \ln n$ neighbors in $G$, because $\{x_i, x_{i+1}\}, \{x_{j-1}, x_{j}\}$ were not included in $H$ as part of the light initialization.
	By \lemmaref{lem:emu:dominating}, there exists $a, b \in S$ which are $(2n^{1/3} \ln n)$-light neighbors of $x_i, x_j$ respectively.
	In addition, $x_{i+1} ,x_{j-1}$ are not $(2n^{1/3} \ln n)$-light neighbors of $x_{i}, x_j$, respectively, thus $w(x_i, a) \le w(x_i, x_{i+1}) \le W_{u,v}$ and $w(x_j, b) \le w(x_{j-1}, x_j) \le W_{u, v}$.

The sub-paths $P_{u, x_i}, P_{x_j, v}$ exist in $H$, and also all the edges $\{x_i, a\}, \{a, b\}, \{b, x_j\} \in E'$. We can use them for bounding $d_H(u, v)$ (see figure \ref{fig:theo:4W_emu}).

	\begin{align*}
		d_H(u, v)
		&\le d_H(u, x_i) + d_H(x_i, a) + d_H(a, b) + d_H(b, x_j) + d_H(x_j, v) \\
		&= d_G(u, x_i) + d_G(x_i, a) + d_G(a, b) + d_G(b, x_j) + d_G(x_j, v) \\
		&\le d_G(u, x_i) + d_G(x_i, a) + d_G(x_i, a) + d_G(x_i, x_j) + d_G(b, x_j) + d_G(b, x_j) + d_G(x_j, v)\\
		&\le d_G(u, v) + 4 W_{u, v}.
	\end{align*}

	Bounding the size is straightforward. The $n^{1/3} \log n$-light initialization introduces at most $O(n^{4/3} \log n)$ edges, while $|S|$ is a Bernoulli random variable with parameters $(n, \frac{1}{n^{1/3}})$.
	Therefore, $E[|S|]=n \cdot \frac{1}{n^{1/3}} = n^{2/3}$ and by Chernoff bound $|S| \le 2n^{2/3}$, w.h.p..
	Thus $|S \times S| = O(n^{2/3} \cdot n^{2/3}) = O(n^{4/3})$ w.h.p..
	
	Hence the total size of the emulator is $O(n^{4/3} \log n)$ w.h.p..
\end{proof}
\begin{figure}[h]
	\centering
	\includegraphics{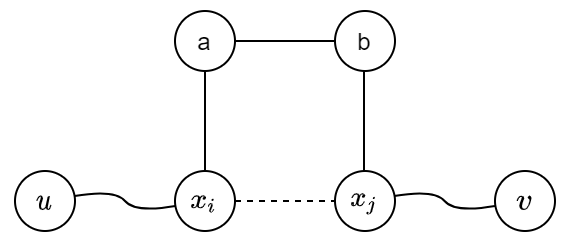}	
	\caption{Straight lines are edges available in $H$. Curved lines are shortest paths available in $H$}
	\label{fig:theo:4W_emu}
\end{figure}

\clearpage

\bibliographystyle{alpha}
\bibliography{12spanner}

\newcommand{\etalchar}[1]{$^{#1}$}
\begin{thebibliography}{BKMP05}

\bibitem[AB17]{DBLP:journals/jacm/AbboudB17}
Amir Abboud and Greg Bodwin.
\newblock The 4/3 additive spanner exponent is tight.
\newblock {\em J. {ACM}}, 64(4):28:1--28:20, 2017.

\bibitem[ABP17]{ABP17}
Amir Abboud, Greg Bodwin, and Seth Pettie.
\newblock A hierarchy of lower bounds for sublinear additive spanners.
\newblock In {\em Proceedings of the Twenty-Eighth Annual ACM-SIAM Symposium on
  Discrete Algorithms}, SODA ’17, page 568–576, USA, 2017. Society for
  Industrial and Applied Mathematics.

\bibitem[ABS{\etalchar{+}}20]{ABSKS20}
Abu~Reyan Ahmed, Greg Bodwin, Faryad~Darabi Sahneh, Stephen~G. Kobourov, and
  Richard Spence.
\newblock Weighted additive spanners.
\newblock In Isolde Adler and Haiko M{\"{u}}ller, editors, {\em Graph-Theoretic
  Concepts in Computer Science - 46th International Workshop, {WG} 2020, Leeds,
  UK, June 24-26, 2020, Revised Selected Papers}, volume 12301 of {\em Lecture
  Notes in Computer Science}, pages 401--413. Springer, 2020.

\bibitem[ACIM99]{DBLP:journals/siamcomp/AingworthCIM99}
Donald Aingworth, Chandra Chekuri, Piotr Indyk, and Rajeev Motwani.
\newblock Fast estimation of diameter and shortest paths (without matrix
  multiplication).
\newblock {\em {SIAM} J. Comput.}, 28(4):1167--1181, 1999.

\bibitem[ADD{\etalchar{+}}93]{DBLP:journals/dcg/AlthoferDDJS93}
Ingo Alth{\"{o}}fer, Gautam Das, David~P. Dobkin, Deborah Joseph, and
  Jos{\'{e}} Soares.
\newblock On sparse spanners of weighted graphs.
\newblock {\em Discret. Comput. Geom.}, 9:81--100, 1993.

\bibitem[BKMP05]{DBLP:conf/soda/BaswanaKMP05}
Surender Baswana, Telikepalli Kavitha, Kurt Mehlhorn, and Seth Pettie.
\newblock New constructions of (alpha, beta)-spanners and purely additive
  spanners.
\newblock In {\em Proceedings of the Sixteenth Annual {ACM-SIAM} Symposium on
  Discrete Algorithms, {SODA} 2005, Vancouver, British Columbia, Canada,
  January 23-25, 2005}, pages 672--681. {SIAM}, 2005.

\bibitem[Bod20]{DBLP:journals/corr/abs-2001-07741}
Greg Bodwin.
\newblock Some general structure for extremal sparsification problems.
\newblock {\em CoRR}, abs/2001.07741, 2020.

\bibitem[BW15]{BW15}
Gregory Bodwin and Virginia~Vassilevska Williams.
\newblock Very sparse additive spanners and emulators.
\newblock In Tim Roughgarden, editor, {\em Proceedings of the 2015 Conference
  on Innovations in Theoretical Computer Science, {ITCS} 2015, Rehovot, Israel,
  January 11-13, 2015}, pages 377--382. {ACM}, 2015.

\bibitem[BW16]{BW16}
Greg Bodwin and Virginia~Vassilevska Williams.
\newblock Better distance preservers and additive spanners.
\newblock In Robert Krauthgamer, editor, {\em Proceedings of the Twenty-Seventh
  Annual {ACM-SIAM} Symposium on Discrete Algorithms, {SODA} 2016, Arlington,
  VA, USA, January 10-12, 2016}, pages 855--872. {SIAM}, 2016.

\bibitem[Che13]{DBLP:conf/soda/Chechik13}
Shiri Chechik.
\newblock New additive spanners.
\newblock In Sanjeev Khanna, editor, {\em Proceedings of the Twenty-Fourth
  Annual {ACM-SIAM} Symposium on Discrete Algorithms, {SODA} 2013, New Orleans,
  Louisiana, USA, January 6-8, 2013}, pages 498--512. {SIAM}, 2013.

\bibitem[DHZ00]{DHZ00}
D.~Dor, S.~Halperin, and U.~Zwick.
\newblock All-pairs almost shortest paths.
\newblock {\em SIAM J. Comput.}, 29:1740--1759, 2000.

\bibitem[EGN19]{DBLP:journals/corr/abs-1907-11422}
Michael Elkin, Yuval Gitlitz, and Ofer Neiman.
\newblock Almost shortest paths with near-additive error in weighted graphs.
\newblock {\em CoRR}, abs/1907.11422, 2019.

\bibitem[Elk01]{E01}
M.~Elkin.
\newblock Computing almost shortest paths.
\newblock In {\em Proc. 20th ACM Symp. on Principles of Distributed Computing},
  pages 53--62, 2001.

\bibitem[EN19]{EN16}
Michael Elkin and Ofer Neiman.
\newblock Hopsets with constant hopbound, and applications to approximate
  shortest paths.
\newblock {\em {SIAM} J. Comput.}, 48(4):1436--1480, 2019.

\bibitem[EP04]{EP04}
Michael Elkin and David Peleg.
\newblock (1+epsilon, beta)-spanner constructions for general graphs.
\newblock {\em {SIAM} J. Comput.}, 33(3):608--631, 2004.

\bibitem[EZ06]{EZ04}
M.~Elkin and J.~Zhang.
\newblock Efficient algorithms for constructing $(1+\eps,\beta)$-spanners in
  the distributed and streaming models.
\newblock {\em Distributed Computing}, 18:375--385, 2006.

\bibitem[Knu14]{DBLP:conf/swat/Knudsen14}
Mathias B{\ae}k~Tejs Knudsen.
\newblock Additive spanners: {A} simple construction.
\newblock In R.~Ravi and Inge~Li G{\o}rtz, editors, {\em Algorithm Theory -
  {SWAT} 2014 - 14th Scandinavian Symposium and Workshops, Copenhagen, Denmark,
  July 2-4, 2014. Proceedings}, volume 8503 of {\em Lecture Notes in Computer
  Science}, pages 277--281. Springer, 2014.

\bibitem[Pet09]{P09}
Seth Pettie.
\newblock Low distortion spanners.
\newblock {\em ACM Transactions on Algorithms}, 6(1), 2009.

\bibitem[PS89]{PS89}
D.~Peleg and A.~Sch\"affer.
\newblock Graph spanners.
\newblock {\em J. Graph Theory}, 13:99--116, 1989.

\bibitem[TZ06]{TZ06}
M.~Thorup and U.~Zwick.
\newblock Spanners and emulators with sublinear distance errors.
\newblock In {\em Proc. of Symp. on Discr. Algorithms}, pages 802--809, 2006.

\end{thebibliography}
\end{document}